\newcommand{\ds}{\displaystyle}
\def\EXP{\textrm{{\large e}}}
\newcommand{\bos}{{\bf a}}
\newcommand{\kop}{{\bf k}}
\newcommand{\pop}{\boldsymbol{p}}
\newcommand{\sop}{\boldsymbol{\sigma}}
\newcommand{\wop}{{\bf w}}
\newcommand{\ii}{{\mathrm i}}
\numberwithin{equation}{section}
\newtheorem{theorem}{Theorem}[section]
\newtheorem{proposition}[theorem]{Proposition}
\theoremstyle{definition}
\newtheorem{remark}[theorem]{Remark}
\newcommand{\Z}{{\mathbb Z}}
\newcommand{\R}{{\mathbb R}}
\newcommand{\C}{{\mathbb C}}
\newcommand{\A}{{\mathcal A}}
\newcommand{\I}{{\mathrm i}}
\begin{document}

\title[Tetrahedron Equation
and modular double]{Tetrahedron Equation
and Quantum $\boldsymbol{R}$ Matrices
for modular double of 
$\boldsymbol{U_q(D^{(2)}_{n+1})}, 
\boldsymbol{U_q(A^{(2)}_{2n})}$ and  
$\boldsymbol{U_q(C^{(1)}_{n})}$}

\author{Atsuo Kuniba}
\email{atsuo@gokutan.c.u-tokyo.ac.jp}
\address{Institute of Physics, University of Tokyo, Komaba, Tokyo 153-8902, Japan}

\author{Masato Okado}
\email{okado@sci.osaka-cu.ac.jp}
\address{Department of Mathematics, Osaka City University, 
3-3-138, Sugimoto, Sumiyoshi-ku, Osaka, 558-8585, Japan}

\author{Sergey Sergeev}
\email{Sergey.Sergeev@canberra.edu.au}
\address{Faculty of Education, Science, Technology, Engineering and Mathematics, University of Canberra, ACT 2106 Australia}

\maketitle

\begin{center}{\bf Abstract}
\end{center}
\vspace{0.4cm}
We introduce a homomorphism from the quantum affine algebras 
$U_q(D^{(2)}_{n+1}), U_q(A^{(2)}_{2n}), U_q(C^{(1)}_{n})$
to the $n$-fold tensor product of the $q$-oscillator algebra ${\mathcal A}_q$.
Their action commutes with the solutions of the Yang-Baxter equation
obtained by reducing the solutions of the tetrahedron equation
associated with the modular and the Fock representations of $\A_q$.
In the former case,
the commutativity is enhanced to 
the modular double of these quantum affine algebras.

\vspace{0.5cm}\noindent
{\bf Mathematics Subject Classification}.  81R50, 17B37, 16T25

\noindent
{\bf Keywords}. Tetrahedron equation, $q$-oscillator algebra, Yang-Baxter equation,
Modular double.

\section{Introduction}\label{sec:intro}

The tetrahedron equation \cite{Zam80} is a three dimensional (3D)
generalization of the Yang-Baxter equation \cite{Bax}.
Among its several versions, the basic one adapted to homogeneous 
3D vertex models has the form
\begin{align}\label{te}
R_{1,2,4}R_{1,3,5}R_{2,3,6}R_{4,5,6} = R_{4,5,6}R_{2,3,6}R_{1,3,5}R_{1,2,4},
\end{align}
where $R$ is a linear operator on the tensor cube of some vector space.
The equality holds for the operators on its sixfold tensor product, where 
the indices specify the components on which $R$ acts nontrivially. 
We call a solution to the tetrahedron equation a 3D $R$.

Several solutions have been found until now with some important 
clues to the relevant algebraic structures
such as the quantized coordinate ring of $\mathrm{SL}_3$ \cite{KV}, 
PBW basis of the nilpotent subalgebra of $U_q(sl_3)$ \cite{S},
the $q$-oscillator algebra $\A_q$ \cite{BMS,BS} and so forth.
It is known \cite{KO1}  that 
the 3D $R$  associated with 
the Fock representation of $\A_q$ \cite{BS} coincides with 
the one in \cite{KV}.

The tetrahedron equation reduces to the Yang Baxter equation
\[
R_{1,2} R_{1,3} R_{2,3}  = R_{2,3}R_{1,3} R_{1,2}
\]
if the spaces $4,5,6$ are evaluated away suitably \cite{S2,KasV}.
In \cite{KS, KO4}, such a reduction was achieved by taking the
matrix elements with respect to certain boundary vectors.
The resulting solutions to the Yang-Baxter equation were
identified with the quantum $R$ matrices of 
various quantum affine algebras $U_q(\mathfrak{g})$ 
and their representations. 

In this paper, we exploit further aspects 
of the 3D $R$ associated with what we call the modular and the Fock 
representations of the $q$-oscillator algebra $\A_q$.
We use the two boundary vectors to generate
four families of solutions to the Yang-Baxter equation for 
each representation.
Our first result, Theorem \ref{th:main1},  is that they commute with 
the quantum affine algebras 
$U_q(\mathfrak{g})$ with 
$\mathfrak{g} = D^{(2)}_{n+1}, C^{(1)}_n, A^{(2)}_{2n}$ and 
$\tilde{A}^{(2)}_{2n}$ (see Section \ref{ss:qaa} for the definition).
The essential ingredient for this statement is 
a new homomorphism from $U_q(\mathfrak{g})$ to $\A_q^{\otimes n}$ 
in Proposition \ref{pr:pi}.
The two boundary vectors correspond to 
the short and the long simple roots of $\mathfrak{g}$ 
at the two ends of the Dynkin diagram.

Our second result, Theorem \ref{th:main2}, 
is obtained by applying Theorem \ref{th:main1} 
to the modular representation of the pair $(\A_q,\A_{\tilde q})$ such that 
$(\log q)(\log {\tilde q}) = -\pi^2$.
We find that the symmetry of 
the relevant solutions of the Yang-Baxter equation
is enhanced naturally from $U_q(\mathfrak{g})$ 
to its modular double 
$U_q(\mathfrak{g}) \otimes U_{\tilde q}({}^L\mathfrak{g})$
where  ${}^L\mathfrak{g}$ is the Langlands dual of 
$\mathfrak{g}$.
The key to this result is Proposition \ref{pr:key} showing that 
the two boundary vectors 
interchange their role when passing to the modular dual.
An analogous feature has been observed in \cite{Ip1}.
For general background on modular double, we refer to \cite{F,FI}.

The layout of the paper is as follows.
In Section \ref{sec:Uq}, the algebra homomorphism from
$U_q(\mathfrak{g})$ to $\A_q^{\otimes n}$
is presented in Proposition \ref{pr:pi}. 
In Section \ref{sec:3DR},
the 3D $R$ \cite{BMS,BS} and 
the characterization of the boundary vectors \cite{KS} are recapitulated. 
In Section \ref{sec:ybe},
reduction to the Yang-Baxter equation \cite{KS,KO4} is explained and 
the symmetry of the consequent solution is described in Theorem \ref{th:main1}.
All the arguments until this point are valid either for 
the modular or the Fock representations of $\A_q$.
They systematize the proof of the commutativity significantly.
In Section \ref{sec:mod}, the general construction in the 
preceding sections are embodied in the modular representation.
The boundary vectors in this representation are new and 
described explicitly in terms of their wave functions.
They lead to our main result, Theorem \ref{th:main2}.
In Section \ref{sec:fock}, it is explained how the specialization of the 
general results in Sections \ref{sec:Uq}--\ref{sec:ybe} 
to the Fock representation
covers an essential part of the earlier result \cite{KO4}.
Appendix \ref{app:chi} contains identities involving 
the boundary wave function $\chi_b(\sigma)$ and the 
quantum dilogarithm.

\section{Quantum Affine Algebras and $q$-Oscillator Algebra}\label{sec:Uq}

\subsection{Quantum affine algebras}\label{ss:qaa}
We assume that $q$ is generic except in Section \ref{sec:mod} and 
Appendix \ref{app:chi}.
(The basic parameter is $q^{\frac{1}{2}}$ rather than $q$ 
in our convention.) 
The Drinfeld-Jimbo quantum affine algebras (without derivation operator) 
$U_q=U_q(A^{(2)}_{2n}), U_q({\tilde A}^{(2)}_{2n})$, 
$U_q(C^{(1)}_{n})$ and $U_q(D^{(2)}_{n+1})$
are the Hopf algebras 
generated by $e_i, f_i, k^{\pm 1}_i\, (0 \le i \le n)$ satisfying the relations
\cite{D,Ji}
\begin{equation}\label{uqdef}
\begin{split}
&k_i k^{-1}_i = k^{-1}_i k_i = 1,\quad [k_i, k_j]=0,\\
&k_ie_jk^{-1}_i = q_i^{a_{ij}}e_j,\quad 
k_if_jk^{-1}_i = q_i^{-a_{ij}}f_j,\quad
[e_i, f_j]=\delta_{ij}\frac{k_i-k^{-1}_i}{q_i-q^{-1}_i},\\
&\sum_{\nu=0}^{1-a_{ij}}(-1)^\nu
e^{(1-a_{ij}-\nu)}_i e_j e_i^{(\nu)}=0,
\quad
\sum_{\nu=0}^{1-a_{ij}}(-1)^\nu
f^{(1-a_{ij}-\nu)}_i f_j f_i^{(\nu)}=0\;\;(i\neq j),
\end{split}
\end{equation}
where $e^{(\nu)}_i = e^\nu_i/[\nu]_{q_i}!, \,
f^{(\nu)}_i = f^\nu_i/[\nu]_{q_i}!$
and 
$[m]_q! = \prod_{k=1}^m [k]_q$ with 
$[m]_q = \frac{q^m-q^{-m}}{q-q^{-1}}$.
The Cartan matrix $(a_{ij})_{0 \le i,j \le n}$ \cite{Kac} is given by
\begin{align*}
a_{i,j} = 2\delta_{i,j}-\max((\log q_j)/(\log q_i),1)\delta_{|i-j|,1}.
\end{align*}
The data $q_i$ are specified 
above the corresponding vertex $i\, (0 \le i \le n)$ 
in the Dynkin diagrams:
\begin{align*}
\begin{picture}(126,60)(0,-15)
\put(-20,0){
\put(35,25){$\mathfrak{g}^{1,1}=D^{(2)}_{n+1}$}
\multiput( 0,0)(20,0){3}{\circle{6}}
\multiput(100,0)(20,0){2}{\circle{6}}
\multiput(23,0)(20,0){2}{\line(1,0){14}}
\put(83,0){\line(1,0){14}}
\multiput( 2.85,-1)(0,2){2}{\line(1,0){14.3}} 
\multiput(102.85,-1)(0,2){2}{\line(1,0){14.3}} 
\multiput(59,0)(4,0){6}{\line(1,0){2}} 
\put(10,0){\makebox(0,0){$<$}}
\put(110,0){\makebox(0,0){$>$}}
\put(0,-5){\makebox(0,0)[t]{$0$}}
\put(20,-5){\makebox(0,0)[t]{$1$}}
\put(40,-5){\makebox(0,0)[t]{$2$}}
\put(100,-5){\makebox(0,0)[t]{$n\!\! -\!\! 1$}}
\put(120,-6.5){\makebox(0,0)[t]{$n$}}
\put(3,18){\makebox(0,0)[t]{$q^{\frac{1}{2}}$}}
\put(20,13){\makebox(0,0)[t]{$q$}}
\put(40,13){\makebox(0,0)[t]{$q$}}
\put(100,13){\makebox(0,0)[t]{$q$}}
\put(123,18){\makebox(0,0)[t]{$q^{\frac{1}{2}}$}}
}
\end{picture}
\begin{picture}(126,60)(0,-15)
\put(20,0){
\put(35,25){$\mathfrak{g}^{2,2}=C^{(1)}_{n}$}
\multiput( 0,0)(20,0){3}{\circle{6}}
\multiput(100,0)(20,0){2}{\circle{6}}
\multiput(23,0)(20,0){2}{\line(1,0){14}}
\put(83,0){\line(1,0){14}}
\multiput( 2.85,-1)(0,2){2}{\line(1,0){14.3}} 
\multiput(102.85,-1)(0,2){2}{\line(1,0){14.3}} 
\multiput(59,0)(4,0){6}{\line(1,0){2}} 
\put(10,0){\makebox(0,0){$>$}}
\put(110,0){\makebox(0,0){$<$}}
\put(0,-5){\makebox(0,0)[t]{$0$}}
\put(20,-5){\makebox(0,0)[t]{$1$}}
\put(40,-5){\makebox(0,0)[t]{$2$}}
\put(100,-5){\makebox(0,0)[t]{$n\!\! -\!\! 1$}}
\put(120,-6.5){\makebox(0,0)[t]{$n$}}
\put(2,15){\makebox(0,0)[t]{$q^2$}}
\put(20,12){\makebox(0,0)[t]{$q$}}
\put(40,12){\makebox(0,0)[t]{$q$}}
\put(100,12){\makebox(0,0)[t]{$q$}}
\put(122,15){\makebox(0,0)[t]{$q^2$}}
}
\end{picture}
\\
\begin{picture}(126,60)(0,-15)
\put(-20,0){
\put(35,25){$\mathfrak{g}^{1,2}=A^{(2)}_{2n}$}
\multiput( 0,0)(20,0){3}{\circle{6}}
\multiput(100,0)(20,0){2}{\circle{6}}
\multiput(23,0)(20,0){2}{\line(1,0){14}}
\put(83,0){\line(1,0){14}}
\multiput( 2.85,-1)(0,2){2}{\line(1,0){14.3}} 
\multiput(102.85,-1)(0,2){2}{\line(1,0){14.3}} 
\multiput(59,0)(4,0){6}{\line(1,0){2}} 
\put(10,0){\makebox(0,0){$<$}}
\put(110,0){\makebox(0,0){$<$}}
\put(0,-5){\makebox(0,0)[t]{$0$}}
\put(20,-5){\makebox(0,0)[t]{$1$}}
\put(40,-5){\makebox(0,0)[t]{$2$}}
\put(100,-5){\makebox(0,0)[t]{$n\!\! -\!\! 1$}}
\put(120,-6.5){\makebox(0,0)[t]{$n$}}
\put(3,18){\makebox(0,0)[t]{$q^{\frac{1}{2}}$}}
\put(20,12.5){\makebox(0,0)[t]{$q$}}
\put(40,12.5){\makebox(0,0)[t]{$q$}}
\put(100,12.5){\makebox(0,0)[t]{$q$}}
\put(122,16){\makebox(0,0)[t]{$q^2$}}
}
\end{picture}
\begin{picture}(126,60)(0,-15)
\put(20,0){
\put(35,25){$\mathfrak{g}^{2,1}={\tilde A}^{(2)}_{2n}$}
\multiput( 0,0)(20,0){3}{\circle{6}}
\multiput(100,0)(20,0){2}{\circle{6}}
\multiput(23,0)(20,0){2}{\line(1,0){14}}
\put(83,0){\line(1,0){14}}
\multiput( 2.85,-1)(0,2){2}{\line(1,0){14.3}} 
\multiput(102.85,-1)(0,2){2}{\line(1,0){14.3}} 
\multiput(59,0)(4,0){6}{\line(1,0){2}} 
\put(10,0){\makebox(0,0){$>$}}
\put(110,0){\makebox(0,0){$>$}}
\put(0,-5){\makebox(0,0)[t]{$0$}}
\put(20,-5){\makebox(0,0)[t]{$1$}}
\put(40,-5){\makebox(0,0)[t]{$2$}}
\put(100,-5){\makebox(0,0)[t]{$n\!\! -\!\! 1$}}
\put(120,-6.5){\makebox(0,0)[t]{$n$}}
\put(2,15){\makebox(0,0)[t]{$q^2$}}
\put(20,12){\makebox(0,0)[t]{$q$}}
\put(40,12){\makebox(0,0)[t]{$q$}}
\put(100,12){\makebox(0,0)[t]{$q$}}
\put(123,18){\makebox(0,0)[t]{$q^{\frac{1}{2}}$}}
}
\end{picture}
\end{align*}
We also let  
$\mathfrak{g}^{s,t}\,(s,t \in \{1,2\})$ denote  
the relevant affine Lie algebras as above.
$\mathfrak{g}^{2,1}={\tilde A}^{(2)}_{2n}$ is isomorphic
to $\mathfrak{g}^{1,2}=A^{(2)}_{2n}$ and their difference 
is only the enumeration of vertices.
The Langlands dual of $\mathfrak{g}^{s,t}$ is given by
${}^L\mathfrak{g}^{s,t}=\mathfrak{g}^{3-s,3-t}$.
Note that $q_0=q^{s^2/2}, \,q_n=q^{t^2/2}$ and 
$q_i=q$ for $0 < i < n$.
The coproduct $\Delta$ has the form 
\begin{align}\label{Del}
\Delta k^{\pm 1}_i = k^{\pm 1}_i\otimes k^{\pm 1}_i,\quad
\Delta e_i = 1\otimes e_i + e_i \otimes k_i,\quad
\Delta f_i = f_i\otimes 1 + k^{-1}_i\otimes f_i.
\end{align}
The opposite coproduct is denoted by $\Delta' = P \circ \Delta$,
where $P(u\otimes v) = v \otimes u$ is the exchange of the 
components.

\subsection{Homomorphism from $U_q$ to $q$-oscillator algebra}\label{ssec:emb}
Let $\A_q$ be the algebra over $\C(q^{\frac{1}{2}})$
generated by ${\bf a}^+, {\bf a}^-, {\bf k}$ and  
${\bf k}^{-1}$ obeying the relations
\begin{align}\label{qrel}
{\bf k}\,{\bf k}^{-1}={\bf k}^{-1}\,{\bf k}=1,\quad
{\bf k}\,{\bf a}^{\pm} = q^{\pm 1} {\bf a}^{\pm}\,{\bf k},
\quad
{\bf a}^{\pm}\,{\bf a}^{\mp} = 1-q^{\mp 1}{\bf k}^2.
\end{align}
The algebra $\A_q$, which we call the $q$-{\em oscillator algebra},
plays a central role in this paper.
Set
\begin{align*}
d = \frac{q}{(q-q^{-1})^2},\quad
d_1 = d\vert_{q\rightarrow q^{1/2}},\quad
d_2 = d\vert_{q\rightarrow q^2}.
\end{align*}
In what follows an element 
${\bf a}^+ \otimes 1\otimes {\bf k} \otimes {\bf a}^- \in 
\A_q^{\otimes 4}$ for example will be denoted by 
${\bf a}^+_1{\bf k}_3{\bf a}^-_4$ etc.
Thus the $q$-oscillator generators with different 
indices are commuting.
\begin{proposition}\label{pr:pi}
For a parameter $z$
the following map defines an algebra homomorphism
$\pi_z: U_q(\mathfrak{g}^{s,t})
\rightarrow \A_q^{\otimes n}[z,z^{-1}]$.
(On the left-hand side, $\pi_z(g)$ is denoted by $g$ for simplicity.)
\begin{alignat*}{3}
e_0&=z^sd_s ({\bf a}^+_1)^s, 
&\quad f_0&= z^{-s}\I^{s^2}({\bf a}^-_1)^s{\bf k}^{-s}_1,
& \quad k_0& = (\I{\bf k}_1)^s,\\
e_i&= d\,{\bf a}^-_i{\bf a}^+_{i+1}{\bf k}^{-1}_i,
&
f_i&= {\bf a}^+_i{\bf a}^-_{i+1}{\bf k}^{-1}_{i+1},
&
k_i&= {\bf k}^{-1}_i{\bf k}_{i+1}\quad(0 < i < n),\\
e_n&= \I^{t^2}d_t({\bf a}^-_n)^t{\bf k}^{-t}_n,
&
f_n&= ({\bf a}^+_n)^t,
&
k_n&= (-\I{\bf k}^{-1}_n)^t.
\end{alignat*}
\end{proposition}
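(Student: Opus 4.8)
The plan is to verify that the proposed images of $e_i,f_i,k^{\pm1}_i$ satisfy each defining relation in \eqref{uqdef}, organized by relation type and using that generators of $\A_q$ with distinct site labels commute; the only oscillator input is \eqref{qrel}, together with its consequences for the powers $({\bf a}^{\pm})^s$, $({\bf a}^{\pm})^t$ at the two ends. The Cartan relations are immediate, since each $k_i$ is a monomial in the commuting ${\bf k}^{\pm1}_j$, so $k_ik^{-1}_i=1$ and $[k_i,k_j]=0$. For the weight relations $k_ie_jk^{-1}_i=q_i^{a_{ij}}e_j$ and $k_if_jk^{-1}_i=q_i^{-a_{ij}}f_j$, I would use that conjugation by ${\bf k}_j$ rescales ${\bf a}^{\pm}_j$ by $q^{\pm1}$, so that $k_i$ acts on the monomial $e_j$ (resp. $f_j$) by an overall scalar; the check is then a bookkeeping of exponents against the Cartan matrix, trivial for $|i-j|\ge2$ (where $a_{ij}=0$) and requiring the boundary powers $s,t$ to be matched against $q_0=q^{s^2/2}$, $q_n=q^{t^2/2}$ when $|i-j|\le1$.

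For the relations $[e_i,f_j]=\delta_{ij}(k_i-k^{-1}_i)/(q_i-q^{-1}_i)$, pairs with $|i-j|\ge2$ have disjoint oscillator support and commute. When $|i-j|=1$ the two images overlap in exactly one site, and on that site both contribute an oscillator of the same type (both ${\bf a}^+$ or both ${\bf a}^-$), so a short computation with \eqref{qrel} shows the two products agree and the commutator vanishes, as required by $\delta_{ij}=0$. The diagonal case $i=j$ is where the normalizations are forced: computing $e_if_i-f_ie_i$ on the relevant one or two sites by means of ${\bf a}^{\pm}{\bf a}^{\mp}=1-q^{\mp1}{\bf k}^2$ must reproduce $(k_i-k^{-1}_i)/(q_i-q^{-1}_i)$, and one checks that $d=q/(q-q^{-1})^2$ for interior $i$, and $d_s,d_t$ together with the phases $\I^{s^2},\I^{t^2}$ at the ends, are exactly the values making this identity hold; at the boundary this uses the normal-ordered form of $({\bf a}^+)^s({\bf a}^-)^s{\bf k}^{-s}$ versus $({\bf a}^-)^s{\bf k}^{-s}({\bf a}^+)^s$.

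The Serre relations are the main obstacle. Non-adjacent pairs have $a_{ij}=0$ and commute on disjoint support, while for adjacent interior pairs $a_{ij}=-1$ and the cubic relation is checked using the single-site overlap. The delicate cases sit at the two ends, where the entries $(a_{ij},a_{ji})$ between $0,1$ depend on $s$ and between $n-1,n$ depend on $t$, taking the values $-1$ or $-2$; when an entry equals $-2$ the relevant Serre sum is cubic in the boundary generator and involves the powers $({\bf a}^{\pm})^2$. I expect these boundary relations to be the crux, since there the exponents $s,t$, the phases $\I^{s^2},\I^{t^2}$, and the deformed bases $q_0,q_n$ must all conspire. I would handle them by first establishing a normal-ordering lemma expressing $({\bf a}^{\pm})^m({\bf a}^{\mp})^{m'}$ as a $q$-binomial sum in ${\bf k}$ and the $q_i$-deformed integers, and then substituting it into the sums $\sum_\nu(-1)^\nu e_i^{(1-a_{ij}-\nu)}e_je_i^{(\nu)}$ to exhibit the telescoping cancellation.
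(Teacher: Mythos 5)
Your proposal is correct and takes essentially the same approach as the paper: the paper's entire proof is the single remark that the proposition ``can be shown by directly checking the relations (\ref{uqdef})'', and your case-by-case plan (Cartan and weight relations by exponent bookkeeping, $[e_i,f_j]$ via disjoint support or same-type single-site overlap, the constants $d$, $d_s$, $d_t$ and phases $\I^{s^2}$, $\I^{t^2}$ forced by the diagonal commutator, and the boundary Serre relations via normal ordering of oscillator powers) is precisely that direct verification, spelled out. The computations you outline do go through as claimed, so there is no gap.
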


The proposition can be shown by directly checking the 
relations (\ref{uqdef}).
The convention $z^{\pm s}$ rather than $z^{\pm 1}$
is just to avoid $z^{{\bf h}_3/s}$ 
in the forthcoming formula (\ref{sdef}).

\begin{remark}
If the formulas for $e_i, f_i, k_i$ with $0< i < n$ are
interpreted with $i \in \Z_n$, then Proposition \ref{pr:pi} 
gives an algebra homomorphism
$U_q(A^{(1)}_{n-1}) \rightarrow \A_q^{\otimes n}[z,z^{-1}]$.
For this case and $U_q(\mathfrak{g}^{s,t})$ without $e_0, f_0$ and $k_0$,
the homomorphism $\pi_z$ was essentially known in \cite{Ha}.
For type $A$ with $q$ roots of unity, 
similar homomorphisms and their applications have been studied in 
\cite{Ha, BKMS, DJMM, T, BB,SMS}.
\end{remark}

\section{3 dimensional $R$ and boundary vectors}\label{sec:3DR}

In Sections \ref{sec:mod} and \ref{sec:fock} 
we will consider the modular representation and the 
Fock representation of the $q$-oscillator algebra $\A_q$.
Let $M$ uniformly denote the left $\A_q$ module therein.
Then there is a unique (up to sign) involutive operator 
$R \in \mathrm{End}(M^{\otimes 3})$ \cite{BS, BMS} such that 
\begin{alignat}{2}
R\,{\bf k}_2{\bf a}^+_1 
&= ({\bf k}_3{\bf a}^+_1+{\bf k}_1{\bf a}^+_2{\bf a}^-_3)R, & \qquad
R\,{\bf k}_2{\bf a}^-_1 
&= ({\bf k}_3{\bf a}^-_1+{\bf k}_1{\bf a}^-_2{\bf a}^+_3)R,\label{1}\\
R\,{\bf a}^+_2 
&= ({\bf a}^+_1{\bf a}^+_3-{\bf k}_1{\bf k}_3{\bf a}^+_2)R, & \qquad
R\,{\bf a}^-_2 
&= ({\bf a}^-_1{\bf a}^-_3-{\bf k}_1{\bf k}_3{\bf a}^-_2)R,\label{2}\\
R\,{\bf k}_2{\bf a}^+_3 
&= ({\bf k}_1{\bf a}^+_3+{\bf k}_3{\bf a}^-_1{\bf a}^+_2)R, & \qquad
R\,{\bf k}_2{\bf a}^-_3 
&= ({\bf k}_1{\bf a}^-_3+{\bf k}_3{\bf a}^+_1{\bf a}^-_2)R,\label{3}\\
R\, {\bf k}_1 {\bf k}_2 & = {\bf k}_1 {\bf k}_2R, & \qquad
R\, {\bf k}_2 {\bf k}_3 & = {\bf k}_2 {\bf k}_3R. \label{4}
\end{alignat}
Moreover, it satisfies the tetrahedron equation (\ref{te}) in 
$\mathrm{End}(M^{\otimes 6})$.
We simply call it the 3D $R$.
It is customary to depict $R=R_{1,2,3}$ as the intersection of 
the three arrows $1, 2$ and $3$:
\[
\begin{picture}(80,45)(-40,-22)

\put(-77,-2){$R_{1,2,3} \,=$}
\put(0,0){\vector(0,1){15}}\put(0,0){\line(0,-1){15}}
\put(-2,-22){$\scriptstyle{2}$}

\put(0,0){\line(3,1){20}}\put(0,0){\vector(-3,-1){20}}
\put(22,6){$\scriptstyle{3}$} 

\put(0,0){\line(-3,1){20}}\put(0,0){\vector(3,-1){20}}
\put(-26,6){$\scriptstyle{1}$}

\end{picture}
\] 
 
Let $M$ (resp. $M^\ast$) 
be a left (resp. right) $\A_q$ module.
Suppose they are equipped with the bilinear pairing 
$\langle \; | \; \rangle:  M^\ast \times M \rightarrow \C$
such that 
$\langle \tilde{m}' |m\rangle =\langle m'|\tilde{m}\rangle$ 
($\langle \tilde{m}' |:=\langle m' |g, \,
|\tilde{m}\rangle := g |m\rangle$)
for any $g \in \A_q$.

Consider the vectors $|\chi^{(s)}\rangle \in M$ and 
$\langle \chi^{(s)}| \in M^\ast$ for $s=1,2$ 
satisfying
\begin{alignat}{2}
{\bf a}^{\pm}|\chi^{(1)}\rangle &= (1 \mp q^{\mp\frac{1}{2}}{\bf k})|\chi^{(1)}\rangle,
\quad&
\langle \chi^{(1)}|{\bf a}^{\pm} &= 
\langle \chi^{(1)}|(1\pm q^{\pm\frac{1}{2}}{\bf k}),
\label{chi1}\\
{\bf a}^+|\chi^{(2)}\rangle &= {\bf a}^-|\chi^{(2)}\rangle,
\quad&
\langle \chi^{(2)}|{\bf a}^+ &= \langle \chi^{(2)}|{\bf a}^-.
\label{chi2}
\end{alignat}

\begin{proposition}$ $\cite[Prop.~4.1]{KS}\label{pr:bvec}
The following equalities in $M^{\otimes 3}$ and $M^{\ast \otimes 3}$ 
are valid for $s=1,2$:
\begin{align*}
R (|\chi^{(s)}\rangle \otimes|\chi^{(s)}\rangle \otimes|\chi^{(s)}\rangle)
&=
|\chi^{(s)}\rangle \otimes|\chi^{(s)}\rangle \otimes|\chi^{(s)}\rangle,\\
(\langle \chi^{(s)}| \otimes
\langle \chi^{(s)}| \otimes
\langle \chi^{(s)}| )R
&=\langle \chi^{(s)}| \otimes
\langle \chi^{(s)}| \otimes
\langle \chi^{(s)}|.
\end{align*}
\end{proposition}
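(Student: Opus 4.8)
The plan is to establish the first (ket) identity $R\,|\chi^{(s)}\rangle^{\otimes 3}=|\chi^{(s)}\rangle^{\otimes 3}$; the second (bra) identity then follows by the identical argument carried out on the right $\A_q$-module $M^\ast$ with the companion relations in \eqref{chi1}--\eqref{chi2}. Abbreviate $|\Xi\rangle:=|\chi^{(s)}\rangle\otimes|\chi^{(s)}\rangle\otimes|\chi^{(s)}\rangle$ and $|\Psi\rangle:=R|\Xi\rangle$. Since $R$ is involutive it is invertible, so the overall strategy is to prove that $|\Psi\rangle$ is a scalar multiple of $|\Xi\rangle$ and then to pin that scalar down to $1$.

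First I would recast the boundary conditions in eigenvector form. For $s=1$, relation \eqref{chi1} is equivalent to $({\bf a}^+_j+q^{-1/2}{\bf k}_j)|\Xi\rangle=|\Xi\rangle$ and $({\bf a}^-_j-q^{1/2}{\bf k}_j)|\Xi\rangle=|\Xi\rangle$ in each slot $j=1,2,3$, so $|\Xi\rangle$ is a joint eigenvector; for $s=2$, relation \eqref{chi2} says simply ${\bf a}^+_j|\Xi\rangle={\bf a}^-_j|\Xi\rangle$. The mechanism is then to transport the oscillator generators across $R$ using \eqref{1}--\eqref{4}: applying each relation to $|\Xi\rangle$ and collapsing the generators that land on $|\Xi\rangle$ by the boundary conditions converts every line of \eqref{1}--\eqref{3} into a linear relation among $|\Psi\rangle$ and the auxiliary vectors $R\,{\bf k}_1^{a}{\bf k}_2^{b}{\bf k}_3^{c}|\Xi\rangle$ (all exponents in $\{0,1\}$). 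The two vectors $R\,{\bf k}_1{\bf k}_2|\Xi\rangle$ and $R\,{\bf k}_2{\bf k}_3|\Xi\rangle$ reduce at once to ${\bf k}_1{\bf k}_2|\Psi\rangle$ and ${\bf k}_2{\bf k}_3|\Psi\rangle$ by \eqref{4}. The remaining auxiliaries must be eliminated by pairing the $+$ and $-$ versions of \eqref{1}--\eqref{3}: for $s=1$ the weighted combination $q^{1/2}(+)+q^{-1/2}(-)$ cancels $R\,{\bf k}_2|\Xi\rangle$ in \eqref{2}, producing for instance
\[
(q^{1/2}+q^{-1/2})|\Psi\rangle=\bigl[q^{1/2}({\bf a}^+_1{\bf a}^+_3-{\bf k}_1{\bf k}_3{\bf a}^+_2)+q^{-1/2}({\bf a}^-_1{\bf a}^-_3-{\bf k}_1{\bf k}_3{\bf a}^-_2)\bigr]|\Psi\rangle,
\]
and analogous combinations of \eqref{1} and \eqref{3} supply the outer slots; for $s=2$ the common left-hand side ${\bf a}^+_j|\Xi\rangle={\bf a}^-_j|\Xi\rangle$ lets one subtract the two versions directly. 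One checks that $|\Xi\rangle$ itself satisfies the resulting equations, so $|\Psi\rangle$ and $|\Xi\rangle$ both lie in the solution set.

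The main obstacle is that the individual ${\bf k}_j$ do not commute with $R$ --- only the products ${\bf k}_1{\bf k}_2$ and ${\bf k}_2{\bf k}_3$ do, by \eqref{4} --- so the three tensor slots are genuinely coupled and there is no way to conjugate a bare ${\bf k}_j$ through $R$; this is exactly why the auxiliary vectors $R\,{\bf k}_j|\Xi\rangle$ are forced into the computation and must be removed by the $\pm$ combinations above. The crux is therefore to show that the finite linear system obtained in this way has an essentially unique solution. This is cleanest in a fixed representation: in both the Fock and the modular representation the relations \eqref{chi1} (resp.\ \eqref{chi2}) cut out a one-dimensional subspace of $M$, so when imposed slotwise they cut out a one-dimensional subspace of $M^{\otimes 3}$; since $|\Psi\rangle$ is forced into it, $|\Psi\rangle=c\,|\Xi\rangle$.

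Finally I would fix the constant. From $R^2=1$ and $|\Psi\rangle=c\,|\Xi\rangle$ one gets $c^2=1$, hence $c=\pm1$. The sign is the one selected by the normalization that makes $R$ the distinguished (rather than its negative) involution solving \eqref{1}--\eqref{4}; a direct overlap computation in each representation gives $c=1$ --- in the Fock case, pairing against the vacuum $|0\rangle^{\otimes 3}$, which $R$ fixes and with which $|\chi^{(s)}\rangle$ has non-vanishing overlap. The bra identity follows by repeating the argument on $M^\ast$.
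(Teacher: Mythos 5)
Your strategy (transport generators through $R$, collapse them on the boundary vector, then argue uniqueness and pin down the scalar) is sensible --- the paper itself gives no proof of Proposition \ref{pr:bvec}, deferring to \cite{KS} --- and your elimination of the auxiliary $R\,{\bf k}_2|\Xi\rangle$ by the $q^{\pm 1/2}$-weighted combinations is correct. The genuine gap is in the step you yourself call the crux. The equations you derive constrain $|\Psi\rangle=R|\Xi\rangle$ only through the coupled right-hand-side operators of \eqref{1}--\eqref{3}, e.g.
\begin{equation*}
(q^{1/2}+q^{-1/2})|\Psi\rangle=\bigl[q^{1/2}({\bf a}^+_1{\bf a}^+_3-{\bf k}_1{\bf k}_3{\bf a}^+_2)+q^{-1/2}({\bf a}^-_1{\bf a}^-_3-{\bf k}_1{\bf k}_3{\bf a}^-_2)\bigr]|\Psi\rangle,
\end{equation*}
whereas the one-dimensionality you invoke is that of the slotwise system \eqref{chi1}/\eqref{chi2}. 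You never show that $|\Psi\rangle$ obeys any slotwise condition, so ``$|\Psi\rangle$ is forced into it'' is a non sequitur: the solution set of your derived system is a priori much larger than that of the slotwise one. The missing idea is to transfer the constraint back using involutivity: since $R=R^{-1}$ and $R\,{\bf k}_1{\bf k}_2={\bf k}_1{\bf k}_2R$, your verification that $|\Xi\rangle$ satisfies the derived equations is \emph{equivalent} to the statement that $|\Psi\rangle$ satisfies the corresponding left-hand-side combinations, which \emph{are} slotwise but comprise only \emph{one} equation per slot: for $s=1$,
\begin{equation*}
(q^{1/2}{\bf a}^+_2+q^{-1/2}{\bf a}^-_2)|\Psi\rangle=(q^{1/2}+q^{-1/2})|\Psi\rangle,\qquad
({\bf a}^+_j-{\bf a}^-_j)|\Psi\rangle=-(q^{1/2}+q^{-1/2}){\bf k}_j|\Psi\rangle\quad (j=1,3).
\end{equation*}
This is strictly weaker than the pair \eqref{chi1} in each slot, so even after the transfer the fact you appeal to is not the one you need; you must prove rigidity of this weaker system. (In the Fock representation it does hold, because each equation is a three-term recurrence in the occupation number of its slot with nonvanishing leading coefficient for generic $q$, so a solution is determined by its $|0\rangle\otimes|0\rangle\otimes|0\rangle$ coefficient --- but no such argument appears in your proof.)

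The representation-dependent assertions also fail precisely in the case this paper needs. In the modular representation, \eqref{chi1}/\eqref{chi2} become the difference equations \eqref{x12} with shifts in the imaginary direction, and their solution spaces are \emph{not} one-dimensional without analyticity and growth hypotheses: any solution can be multiplied by a nonconstant $\ii b$-periodic function such as $\EXP^{2\pi\sigma/b}$. This is exactly why the paper's discussion around Proposition \ref{pr:key} works with wave functions analytic in a strip. Your normalization step has the same defect: $c=1$ is fixed by a vacuum overlap only in the Fock case, and the remark that the sign is ``selected by the normalization that makes $R$ the distinguished involution'' is circular, since $-R$ is equally involutive and satisfies \eqref{1}--\eqref{4}; in the modular case the sign must be fixed against the explicit kernel \eqref{kernel} by an overlap computation, which you do not perform. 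So the proof is incomplete both in its central uniqueness step and in the modular setting on which Theorem \ref{th:main2} relies.
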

We call these vectors {\em boundary vectors}.
In the representations considered in Sections 
\ref{sec:mod} and \ref{sec:fock},
the generator 
${\bf k}$ is expressed as $\mathrm{const}\cdot q^{\bf h}$ 
using some operator ${\bf h}$. 
It satisfies $[{\bf h}, {\bf a}^\pm]=\pm {\bf a}^\pm$ according to 
(\ref{qrel}).  
Moreover,  the relation (\ref{4}) implies 
\begin{align}\label{ice}
[R, {\bf h}_1+ {\bf h}_2]=[R, {\bf h}_2+{\bf h}_3]=0
\end{align}
for $R = R_{1,2,3}$.
It is an analog of the ice rule for the 6 vertex model \cite{Bax} and
will be referred to as the {\em conservation law}.

\section{Solution of Yang-Baxter equation}\label{sec:ybe}

\subsection{General construction}
The 3D $R$ and the boundary vectors 
enable one to construct a family of solutions of the Yang-Baxter equation
labeled with $n \ge 1$ \cite{KS, KO4}.
Consider $3n+3$ copies of $M$ labeled with 
$\alpha_1,\ldots, \alpha_n, \beta_1,\ldots, \beta_n,
\gamma_1,\ldots, \gamma_n$ and $4,5,6$.
Composing the tetrahedron equation (\ref{te})
with the spaces $1,2,3$ relabeled as
$\alpha_i, \beta_i,\gamma_i$, we get
\begin{align*}
&
x^{{\bf h}_4}(xy)^{{\bf h}_5}y^{{\bf h}_6}\bigl(R_{\alpha_1, \beta_1, 4}
R_{\alpha_1, \gamma_1, 5}
R_{\beta_1, \gamma_1, 6}\bigr)
\cdots 
\bigl(R_{\alpha_n, \beta_n, 4}
R_{\alpha_n, \gamma_n, 5}
R_{\beta_n, \gamma_n, 6}\bigr)R_{4,5,6}\\
&=
R_{4,5,6}x^{{\bf h}_4}(xy)^{{\bf h}_5}y^{{\bf h}_6}
\bigl(R_{\beta_1, \gamma_1, 6}
R_{\alpha_1, \gamma_1, 5}
R_{\alpha_1, \beta_1, 4}
\bigr)
\cdots 
\bigl(R_{\beta_n, \gamma_n, 6}
R_{\alpha_n, \gamma_n, 5}
R_{\alpha_n, \beta_n, 4}
\bigr),
\end{align*}
where we have multiplied 
$x^{{\bf h}_4+{\bf h}_5}y^{{\bf h}_5+{\bf h}_6}$ 
from the left and used (\ref{ice}) on the right-hand side.
Regard the boundary vectors in Proposition \ref{pr:bvec} 
as belonging to the spaces $4,5,6$.
Then evaluating the above relation between 
the boundary vectors one obtains the Yang-Baxter equation
\begin{align*}
S_{\boldsymbol{\alpha, \beta}}(x)
S_{\boldsymbol{\alpha, \gamma}}(xy)
S_{\boldsymbol{\beta,\gamma}}(y)
=
S_{\boldsymbol{\beta,\gamma}}(y)
S_{\boldsymbol{\alpha, \gamma}}(xy)
S_{\boldsymbol{\alpha, \beta}}(x),
\end{align*}
where $\boldsymbol{\alpha}=(\alpha_1,\ldots,\alpha_n)$ etc. 
The solution $S_{\boldsymbol{\alpha, \beta}}(z)$ takes 
a matrix product form 
with the boundary ``magnetic field" $z^{{\bf h}_3}$:
\begin{align}\label{sdef}
S_{\boldsymbol{\alpha, \beta}}(z)=\langle \chi^{(s)}|z^{{\bf h}_3}
R_{\alpha_1, \beta_1, 3}
R_{\alpha_2, \beta_2, 3}\cdots
R_{\alpha_n, \beta_n, 3}
|\chi^{(t)}\rangle.
\end{align}
The composition of the 3D $R$ and the evaluation by bra and ket vectors  
in (\ref{sdef}) are taken with respect to the space $M$ signified by $3$.
Plainly $S(z) \in \mathrm{End}(M^{\otimes n} \otimes M^{\otimes n})$ 
suppressing the dummy labels.
We will denote $S(z)$ by $S^{s,t}(z)$ when 
the dependence on $s, t \in \{1,2\}$ should be emphasized.
The formula (\ref{sdef}) is depicted as
\[
\begin{picture}(200,75)(-100,-40)
\put(3,1){\vector(-3,-1){73}}

\put(-108,-27){$\scriptstyle{\langle \chi^{(s)}|z^{{\bf h}_3}}$}

 
\put(-48,-16){\vector(0,1){16}}\put(-48,-16){\line(0,-1){16}}
\put(-48,-16){\vector(3,-1){16}} \put(-48,-16){\line(-3,1){16}}
\put(-75,-10){$\scriptstyle{\alpha_1}$}
\put(-50,-39){$\scriptstyle{\beta_1}$}


\put(-15,-5){\vector(0,1){13}}\put(-15,-5){\line(0,-1){13}}
\put(-15,-5){\vector(3,-1){13}}\put(-15,-5){\line(-3,1){13}}
\put(-39,0){$\scriptstyle{\alpha_2}$}
\put(-17,-25){$\scriptstyle{\beta_2}$}


\multiput(5.1,1.7)(3,1){7}{.} 
\put(6,2){
\put(21,7){\line(3,1){30}}
\put(36,12){\vector(0,1){12}}\put(36,12){\line(0,-1){12}}
\put(36,12){\vector(3,-1){12}}\put(36,12){\line(-3,1){12}}
\put(12,16){$\scriptstyle{\alpha_n}$}
\put(34,-7){$\scriptstyle{\beta_n}$}


}

\put(59,18){$\scriptstyle{3}$} 
\put(67,19){$\scriptstyle{|\chi^{(t)}\rangle}$}
 
 \end{picture}
 \]

\subsection{Quantum group symmetry}
We supplement the $q$-oscillator algebra $\A_q^{\otimes n}$ or its representation
$\mathrm{End}(M^{\otimes n})$ with 
an invertible element $K$ satisfying the relations
\begin{align}\label{K}
K {\bf k}_j
= {\bf k}_j K,
\quad
K {\bf a}^{\pm}_j = (\I q^{\frac{1}{2}})^{\pm 1}
{\bf a}^{\pm}_j  K\quad (1 \le j \le n).
\end{align}
Introduce a slightly modified $S^{s,t}(z)$ by the so-called ``zig-zag transformation":
\begin{align}\label{Shat}
{\hat S}^{s,t}(z) &= (K \otimes 1)S^{s,t}(z)(1\otimes K^{-1}).
\end{align}
In view of ${\bf k}\,{\bf a}^{\pm} = q^{\pm 1}{\bf a}\,{\bf k}$ in (\ref{qrel})
one can formally realize $K$ 
as $K = ({\bf k}_1\ldots {\bf k}_n)^{\nu}$ for some $\nu$.
From this fact and (\ref{4}),
it follows that $[S^{s,t}(z), K\otimes K]=0$.
Using these properties one can show that
${\hat S}^{s,t}(z)$ also satisfies the 
Yang-Baxter equation.

Given parameters $x, y$ and $g \in U_q$, 
let $\Delta'(g)$ and $\Delta(g)$ simply mean the image of 
$(\pi_x \otimes \pi_y)(\Delta'(g))$ and $(\pi_x \otimes \pi_y)(\Delta(g))$
in $\mathrm{End}(M^{\otimes n}\otimes M^{\otimes n})$
by the representation of $\A_q$.
The following theorem, 
which is the main result of this section,
shows the $U_q$-symmetry of ${\hat S}^{s,t}(z)$.
The statement and the proof given in Section \ref{ssec:msk} are valid 
irrespectively of the representations of $\A_q$.
\begin{theorem}\label{th:main1}
With the choice $z=x/y$, the following commutativity holds for 
$s,t \in \{1,2\}$:
\begin{align}\label{DS}
\Delta'(g){\hat S}^{s,t}(z) 
= {\hat S}^{s,t}(z)\Delta(g)\qquad 
\forall g \in U_q(\mathfrak{g}^{s,t}).
\end{align} 
\end{theorem}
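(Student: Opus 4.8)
The plan is to use that $\Delta$, $\Delta'$ and $\pi_x\otimes\pi_y$ (followed by the representation) are all algebra homomorphisms, so that $g\mapsto\Delta(g)$ and $g\mapsto\Delta'(g)$, in the notation of the theorem, are homomorphisms $U_q(\mathfrak{g}^{s,t})\to\mathrm{End}(M^{\otimes n}\otimes M^{\otimes n})$. The relation (\ref{DS}) is then multiplicative: if it holds for $g$ and $g'$, then $\Delta'(gg'){\hat S}=\Delta'(g)\Delta'(g'){\hat S}=\Delta'(g){\hat S}\Delta(g')={\hat S}\Delta(g)\Delta(g')={\hat S}\Delta(gg')$. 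Hence it suffices to verify (\ref{DS}) on the generators $e_i,f_i,k_i\,(0\le i\le n)$, which I would split into three groups: the group-like $k_i$, the ``bulk'' generators $e_i,f_i$ with $0<i<n$, and the two ``boundary'' families at $i=0$ and $i=n$.

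For the group-like generators one has $\Delta(k_i)=\Delta'(k_i)$, a monomial in the ${\bf k}_{\alpha_j},{\bf k}_{\beta_j}$, so (\ref{DS}) becomes $[\Delta(k_i),{\hat S}]=0$. This is immediate: each ${\bf k}$-pair occurring in $\Delta(k_i)$ commutes with every $R_{\alpha_j,\beta_j,3}$ by (\ref{4}), commutes with $z^{{\bf h}_3}$ and with the boundary vectors trivially since those live on space $3$, and commutes with $K$ by (\ref{K}); hence $\Delta(k_i)$ commutes with $S^{s,t}(z)$ and with its zig-zag ${\hat S}^{s,t}(z)$.

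For the bulk generators I would argue locally. Expanding $\Delta(e_i)$ and $\Delta'(e_i)$ via (\ref{Del}) and Proposition \ref{pr:pi}, one sees they involve only the neighbouring oscillator spaces $\alpha_i,\alpha_{i+1},\beta_i,\beta_{i+1}$, hence only the two consecutive factors $R_{\alpha_i,\beta_i,3}R_{\alpha_{i+1},\beta_{i+1},3}$ of the matrix product. I would move ${\bf a}^-_{\alpha_i}$ and ${\bf a}^+_{\alpha_{i+1}}$ through these two factors using (\ref{1})--(\ref{3}); the ${\bf a}^{\pm}_3$ created on the shared auxiliary space by the first factor are designed to cancel against those created by the second, so that the net effect turns the $\alpha$-part of $\Delta'(e_i)$ into the $\beta$-part of $\Delta(e_i)$. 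The ${\bf k}$-factors demanded by the coproduct are supplied by the ${\bf k}_1,{\bf k}_3$ occurring in (\ref{1})--(\ref{3}), while the residual scalars involving $\I$ and $q^{\frac{1}{2}}$ are reconciled by the zig-zag conjugation with $K$ through (\ref{K}). The case of $f_i$ is the same with ${\bf a}^+\leftrightarrow{\bf a}^-$, and no boundary vector enters, since for $0<i<n$ these indices never reach the ends of the chain.

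The substance of the proof, and where I expect the main obstacle, is the two boundary families $i=0$ and $i=n$. Here $\pi_x(e_0)=x^sd_s({\bf a}^+_{\alpha_1})^s$ and $\pi_y(e_0)=y^sd_s({\bf a}^+_{\beta_1})^s$ (and analogously $f_0$), so only the first factor $R_{\alpha_1,\beta_1,3}$ and the left boundary data $\langle\chi^{(s)}|z^{{\bf h}_3}$ participate; symmetrically $e_n,f_n$ involve only $R_{\alpha_n,\beta_n,3}$ and $|\chi^{(t)}\rangle$. The plan is to push the $s$-th powers of ${\bf a}^{\pm}_{\alpha_1}$ and ${\bf a}^{\pm}_{\beta_1}$ through $R_{\alpha_1,\beta_1,3}$ by iterating (\ref{1})--(\ref{3}), collect the resulting operators on the auxiliary space $3$, and annihilate them against $\langle\chi^{(s)}|$ using its defining relations (\ref{chi1}) for $s=1$ and (\ref{chi2}) for $s=2$; the matching of the power $s$ with the superscript of $\chi^{(s)}$, and of $t$ with $\chi^{(t)}$, is precisely what lets the computation close. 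The constraint $z=x/y$ is forced here and nowhere else: since $z$ enters $\pi_z$ only through $e_0,f_0$, commuting $z^{{\bf h}_3}$ past the space-$3$ oscillators produced above yields powers of $z$ that must cancel the explicit $x^{\pm s},y^{\pm s}$ from $\pi_x,\pi_y$, and $z=x/y$ is the unique value achieving this. The delicate point is the bookkeeping of the prefactors $d_s,\I^{s^2},(-\I)^t$ together with the $(\I q^{\frac{1}{2}})^{\pm1}$ produced by $K$, and I would carry out $s=1,2$ (and likewise $t$) as separate but parallel calculations.
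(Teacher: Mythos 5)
Your proposal follows essentially the same route as the paper's proof: reduce (\ref{DS}) to the generators, dispose of the group-like $k_i$ via (\ref{4}), verify the bulk generators $0<i<n$ by a purely local computation on the two factors $R_{\alpha_i,\beta_i,3}R_{\alpha_{i+1},\beta_{i+1},3}$ using (\ref{1})--(\ref{3}) and $R=R^{-1}$ (no boundary vectors), and handle $i=0$ and $i=n$ against $\langle\chi^{(s)}|z^{{\bf h}_3}$ and $|\chi^{(t)}\rangle$ via (\ref{chi1})/(\ref{chi2}), which is indeed exactly where $z=x/y$ is forced. The only minor inaccuracy is in your bulk paragraph: there the zig-zag conjugation is actually trivial, $K^{-1}e_iK=e_i$, since $e_i$ contains one ${\bf a}^+$ and one ${\bf a}^-$ whose $(\I q^{\frac{1}{2}})^{\pm 1}$ factors cancel, so no scalars need reconciling by $K$ for $0<i<n$; the conjugation by $K$ matters only for the boundary generators $i=0,n$, as your last paragraph correctly anticipates.
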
 

\subsection{Proof of Theorem \ref{th:main1}}\label{ssec:msk}
It suffices to show it for $g=k_i, e_i$ and $f_i\,(0 \le i \le n)$.
The case $g= k_i$ follows easily from the first relation of 
(\ref{Del}), Proposition \ref{pr:pi} and (\ref{4}).
Let us present a proof for $g=e_i$.
In terms of $S^{s,t}(z)$ the relation 
(\ref{DS}) with $g= e_i$ takes the form
\begin{align}\label{zer}
(\tilde{e}_i\otimes 1 + k_i \otimes e_i)S^{s,t}(z) - 
S^{s,t}(z)(1\otimes \tilde{e}_i + e_i \otimes k_i)=0.
\end{align}
Here $\tilde{e}_i = K^{-1}e_i K$ and the symbol 
$\pi_x\otimes \pi_y$ is again omitted.

(i) Case $0<i<n$.
From Proposition \ref{pr:pi} and (\ref{K}) we find 
$\tilde{e}_i = e_i$.
Moreover $e_i$ and $k_i$ act nontrivially only on the factors
$R_{\alpha_i, \beta_i, 3}
R_{\alpha_{i+1}, \beta_{i+1}, 3}$ 
constituting $S^{s,t}(z)$ in (\ref{sdef}).
We rename the spaces 
$\alpha_i, \alpha_{i+1},  \beta_i, \beta_{i+1}$ as
$1, 1', 2, 2'$, respectively.
Accordingly $R_{\alpha_i, \beta_i, 3}
R_{\alpha_{i+1}, \beta_{i+1}, 3} = 
R_{1,2,3}R_{1',2',3}$ will simply be denoted by $RR'$ with the
product to be understood in the space $3$.
From Proposition \ref{pr:pi} the proof of (\ref{zer}) is reduced to showing
 \begin{align*}
 ({\bf a}^-_1{\bf a}^+_{1'}{\bf k}^{-1}_1+
 {\bf k}^{-1}_1{\bf k}_{1'}{\bf a}^-_2{\bf a}^+_{2'}{\bf k}^{-1}_2)RR'
 -RR'
 ({\bf a}^-_2{\bf a}^+_{2'}{\bf k}^{-1}_2
 +{\bf a}^-_1{\bf a}^+_{1'}{\bf k}^{-1}_1{\bf k}^{-1}_2{\bf k}_{2'})=0.
 \end{align*}
All the terms here are transformed into the form $R(\cdots)R'$ 
using the defining relations (\ref{1})--(\ref{4}) and
their alternative forms via $R=R^{-1}$ as follows.
\begin{align*}
&{\bf a}^-_1{\bf a}^+_{1'}{\bf k}^{-1}_1RR' = 
{\bf k}_2{\bf a}^-_1 R {\bf k}^{-1}_1 {\bf k}^{-1}_2 {\bf a}^+_{1'} R'
=R({\bf k}_3{\bf a}^-_1+{\bf k}_1{\bf a}^-_2{\bf a}^+_3)
{\bf k}^{-1}_1 {\bf k}^{-1}_2 {\bf a}^+_{1'} R',\\
& {\bf k}^{-1}_1{\bf k}_{1'}{\bf a}^-_2{\bf a}^+_{2'}{\bf k}^{-1}_2 RR'
= {\bf a}^-_2 R {\bf k}^{-1}_1 {\bf k}^{-1}_2{\bf k}_{1'}{\bf a}^+_{2'}R'
= R({\bf a}^-_1{\bf a}^-_3-{\bf k}_1{\bf k}_3{\bf a}^-_2)
{\bf k}^{-1}_1 {\bf k}^{-1}_2{\bf k}_{1'}{\bf a}^+_{2'}R',\\
&RR'{\bf a}^-_2{\bf a}^+_{2'}{\bf k}^{-1}_2 = 
R{\bf a}^-_2{\bf k}^{-1}_2R' {\bf a}^+_{2'} = 
R{\bf a}^-_2{\bf k}^{-1}_2({\bf a}^+_{1'}{\bf a}^+_3-{\bf k}_{1'}{\bf k}_3{\bf a}^+_{2'})R',\\
&RR' {\bf a}^-_1{\bf a}^+_{1'}{\bf k}^{-1}_1{\bf k}^{-1}_2{\bf k}_{2'} = 
R{\bf a}^-_1{\bf k}^{-1}_1{\bf k}^{-1}_2R' {\bf k}_{2'} {\bf a}^+_{1'} = 
R{\bf a}^-_1{\bf k}^{-1}_1{\bf k}^{-1}_2
({\bf k}_{3}{\bf a}^+_{1'}+{\bf k}_{1'}{\bf a}^+_{2'}{\bf a}^-_3)R'.
\end{align*}
To see the cancellation of these terms is now straightforward.

(ii) Case $i=0$ and $s=1$.
$e_0$ and $k_0$ act nontrivially only on the factor
$R=R_{\alpha_1, \beta_1, 3}$ in (\ref{sdef}).
From Proposition \ref{pr:pi} and (\ref{K}) we find 
$\tilde{e}_0 = -\I q^{-\frac{1}{2}}e_0$.
Renaming the spaces 
$\alpha_1$ and  $\beta_1$ as
$1$ and $2$, we see that (\ref{zer}) is reduced to
$0=\langle \chi^{(1)}|z^{{\bf h}_3}(-\I q^{-\frac{1}{2}}x {\bf a}^+_1+ y \I {\bf k}_1{\bf a}^+_2)R
-\langle \chi^{(1)}|z^{{\bf h}_3}R(-\I q^{-\frac{1}{2}}y{\bf a}^+_2+ x{\bf a}^+_1\I {\bf k}_2)$.
Up to an overall factor the last quantity is calculated as
\begin{align*}
&\langle \chi^{(1)}|z^{{\bf h}_3}\left(
q^{-\frac{1}{2}}z{\bf a}^+_1R
-{\bf k}_1{\bf a}^+_2R
- q^{-\frac{1}{2}}R{\bf a}^+_2
+ zR {\bf k}_2{\bf a}^+_1\right)\\
&=
\langle \chi^{(1)}|z^{{\bf h}_3}\left(
q^{-\frac{1}{2}}z{\bf a}^+_1-{\bf k}_1{\bf a}^+_2
- q^{-\frac{1}{2}}({\bf a}^+_1{\bf a}^+_3-{\bf k}_1{\bf k}_3{\bf a}^+_2)
+z({\bf k}_3{\bf a}^+_1+{\bf k}_1{\bf a}^+_2{\bf a}^-_3)\right)R.
\end{align*} 
Due to $\langle \chi^{(1)}|z^{{\bf h}_3}{\bf a}^{\pm}_3
=z^{\pm 1}
\langle \chi^{(1)}|z^{{\bf h}_3}(1\pm q^{\pm\frac{1}{2}}{\bf k}_3)$ by (\ref{chi1}),
this vanishes.

(iii) Case $i=0$ and $s=2$.
We have $\tilde{e}_0 = -q^{-1}e_0$ and (\ref{zer}) is reduced to 
\begin{align*}
0&= \langle \chi^{(2)}|z^{{\bf h}_3}\left(
q^{-1}z^2({\bf a}^+_1)^2R+{\bf k}^2_1({\bf a}^+_2)^2R
-q^{-1}R({\bf a}^+_2)^2-z^2R({\bf a}^+_1)^2{\bf k}^2_2\right)\\
&=\langle \chi^{(2)}|z^{{\bf h}_3}\left(
q^{-1}z^2({\bf a}^+_1)^2+{\bf k}^2_1({\bf a}^+_2)^2
-q^{-1}({\bf a}^+_1{\bf a}^+_3-{\bf k}_1{\bf k}_3{\bf a}^+_2)^2
-z^2({\bf k}_3{\bf a}^+_1+{\bf k}_1{\bf a}^+_2{\bf a}^-_3)^2\right)R.
\end{align*}
Due to (\ref{qrel}) and 
$\langle \chi^{(2)}|z^{{\bf h}_3}{\bf a}^+_3 
= z^2\langle \chi^{(2)}|z^{{\bf h}_3}{\bf a}^-_3$ by (\ref{chi2}),
this vanishes.
We remark that $z$ appearing in $\pi_z$ is linked to the 
$z$-commuting relation $z^{{\bf h}_3}{\bf a}_3^\pm = z^{\pm 1}
{\bf a}_3^\pm z^{{\bf h}_3}$ relevant to $i=0$, i.e., Case (ii) and Case (iii).

(iv) Case $i=n$ and $t=1$.
$e_n$ and $k_n$ act nontrivially only on the factor
$R=R_{\alpha_n, \beta_n, 3}$ in (\ref{sdef}).
From Proposition \ref{pr:pi} and (\ref{K}) we find 
$\tilde{e}_n = \I q^{\frac{1}{2}}e_n$.
Renaming the spaces 
$\alpha_n$ and  $\beta_n$ as
$1$ and $2$, we see that (\ref{zer}) is reduced to
\begin{align*}
0&=\left(q^{\frac{1}{2}}{\bf a}^-_1{\bf k}^{-1}_1-
{\bf k}^{-1}_1{\bf a}^-_2{\bf k}^{-1}_2\right)R|\chi^{(1)}\rangle
-R\left(q^{\frac{1}{2}}{\bf a}^-_2{\bf k}^{-1}_2
-{\bf a}^-_1{\bf k}^{-1}_1{\bf k}^{-1}_2\right)|\chi^{(1)}\rangle\\
&=\left(q^{\frac{1}{2}}{\bf k}_2{\bf a}^-_1-{\bf a}^-_2\right)
R{\bf k}^{-1}_1{\bf k}^{-1}_2|\chi^{(1)}\rangle
-R\left(q^{\frac{1}{2}}{\bf a}^-_2{\bf k}^{-1}_2
-{\bf a}^-_1{\bf k}^{-1}_1{\bf k}^{-1}_2\right)|\chi^{(1)}\rangle\\
&=R\left(\bigl(q^{\frac{1}{2}}({\bf k}_3{\bf a}^-_1+{\bf k}_1{\bf a}^-_2{\bf a}^+_3)
- ({\bf a}^-_1{\bf a}^-_3-{\bf k}_1{\bf k}_3{\bf a}^-_2)\bigr)
{\bf k}^{-1}_1{\bf k}^{-1}_2-
q^{\frac{1}{2}}{\bf a}^-_2{\bf k}^{-1}_2
+{\bf a}^-_1{\bf k}^{-1}_1{\bf k}^{-1}_2\right)|\chi^{(1)}\rangle.
\end{align*}
Due to ${\bf a}^{\pm}_3|\chi^{(1)}\rangle 
= (1 \mp q^{\mp\frac{1}{2}}{\bf k}_3)|\chi^{(1)}\rangle$ by (\ref{chi1}), 
this vanishes.

(v) Case $i=n$ and $t=2$.
We have $\tilde{e}_n = -qe_n$ and (\ref{zer}) is reduced to 
\begin{align*}
0&=\left(q({\bf a}^-_1)^2{\bf k}^{-2}_1+
{\bf k}^{-2}_1({\bf a}^-_2)^2{\bf k}^{-2}_2\right)R|\chi^{(2)}\rangle
-R\left(({\bf a}^-_2)^2{\bf k}^{-2}_2
+({\bf a}^-_1)^2{\bf k}^{-2}_1{\bf k}^{-2}_2
\right)|\chi^{(2)}\rangle\\
&=R\left(
\bigl(
q({\bf k}_3{\bf a}^-_1+{\bf k}_1{\bf a}^-_2{\bf a}^+_3)^2
+({\bf a}^-_1{\bf a}^-_3-{\bf k}_1{\bf k}_3{\bf a}^-_2)^2
\bigr){\bf k}^{-2}_1{\bf k}^{-2}_2
-({\bf a}^-_2)^2{\bf k}^{-2}_2
-({\bf a}^-_1)^2{\bf k}^{-2}_1{\bf k}^{-2}_2
\right)|\chi^{(2)}\rangle.
\end{align*}
Due to (\ref{qrel}) and 
${\bf a}^+_3|\chi^{(2)}\rangle = {\bf a}^-_3|\chi^{(2)}\rangle$ by (\ref{chi2}), this vanishes.
This completes the proof of (\ref{DS}) for all $g=e_i$.
The case $g=f_i$ can be verified similarly. 
\qed

\section{Example: Modular representation}\label{sec:mod}

\subsection{Modular representation of $q$-oscillator algebra}

Let $\sop,\pop$ be the generators of the Heisenberg algebra
$[\sop,\pop]=\frac{\ii}{2\pi}$.
We introduce a modular pair of the Weyl algebras, 
the exponential form of the Heisenberg algebra,  by
\begin{equation}\label{heis}
\begin{split}
&\kop=-\ii\EXP^{\pi b \sop},\,\quad \quad\wop=\EXP^{2\pi b \pop}\;,
\quad\quad \kop\wop = q\wop\kop\;,\quad q=\EXP^{\ii\pi b^2},\\
&\tilde\kop=-\ii\EXP^{\pi b^{-1} \sop}\;,\quad \tilde{\wop}=\EXP^{2\pi b^{-1} \pop}\;,\quad 
\tilde{\kop}\tilde{\wop} = \tilde{q}\tilde{\wop}\tilde{\kop}\;,\quad \tilde{q}=\EXP^{\ii\pi b^{-2}}.
\end{split}
\end{equation}
It also satisfies ${\bf k}\tilde{\bf w} = - \tilde{\bf w} {\bf k}, 
\tilde{\bf k}{\bf w} = - {\bf w}\tilde{\bf k}, 
[{\bf k}, \tilde{\bf k}]=[{\bf w}, \tilde{\bf w}]=0$.
The ``tilde'' transformation just means the replacement $b\to b^{-1}$. 
We set $\eta=\frac{1}{2}(b+b^{-1})$ 
and concentrate on the so-called strong coupling regime $0<\eta< 1$ 
in this section.
This implies that $|b|=1, \mathrm{Re}(b) = \mathrm{Re}(b^{-1}) = \eta$.

Recall that $\A_q=\langle {\bf a}^\pm, {\bf k}^\pm\rangle$ 
is the $q$-oscillator algebra (\ref{qrel}).
We call the $q$-oscillator algebra  
$\A_{\tilde q}=\langle \tilde{\bf a}^\pm, \tilde{\bf k}^\pm\rangle$
the {\em modular dual} of $\A_q$.
Identifying the generators ${\bf k}, \tilde{\bf k}$ in $\A_q, \A_{\tilde q}$ 
with those in 
the Weyl algebras, it is easy to see that
\begin{equation}\label{iden}
\begin{split}
&\bos^+=(1-q^{-1}\kop^2)^{1/2}\wop\;,\quad \bos^-=(1-q\kop^2)^{1/2}\wop^{-1},\\
&\tilde\bos^+=(1-\tilde{q}^{-1}\tilde{\kop}^2)^{1/2}\tilde{\wop}\;,\quad
\tilde\bos^-=(1-\tilde{q}\tilde{\kop}^2)^{1/2}\tilde{\wop}^{-1}
\end{split}
\end{equation}
satisfy the defining relations of $\A_q$ and $\A_{\tilde q}$.
The modular pair of the Heisenberg/Weyl algebras has the 
coordinate representations on the bra and ket vectors\footnote{
In terms of wave functions, it is a representation in the space of square integrable 
functions of $\sigma \in \R$ admitting an analytical continuation into an appropriate
horizontal strip. See \cite{Sc} for further details.} as
\begin{equation}\label{corep}
\begin{split}
&\langle\sigma|\sop=\sigma\langle\sigma|,\quad
\langle\sigma|\wop=\langle\sigma-\ii b|,\quad
\langle\sigma|\tilde{\wop}=\langle\sigma-\ii b^{-1}|,\\
&\sop |\sigma\rangle = \sigma  |\sigma\rangle,\quad
{\bf w}|\sigma\rangle = |\sigma + \I b\rangle,\quad
\tilde{\bf w}|\sigma\rangle = |\sigma + \I b^{-1}\rangle
\quad (\langle\sigma|\sigma'\rangle = \delta(\sigma-\sigma')).
\end{split}
\end{equation}
The composition of (\ref{iden}) and (\ref{corep}) will be 
referred to as {\em modular representation} of the pair
$(\A_q, \A_{\tilde q})$.
The relevant 3D $R$ is given by the integral kernel \cite{BMS}
\begin{equation}\label{kernel}
\begin{array}{l}
\ds \langle \sigma_1^{},\sigma_2^{},\sigma_3^{}|R|\sigma_1',\sigma_2',\sigma_3'\rangle = 
\delta_{\sigma_1^{}+\sigma_2^{},\sigma_1'+\sigma_2'}
\delta_{\sigma_2^{}+\sigma_3^{},\sigma_2'+\sigma_3'}
\sqrt{\frac{\varphi(\sigma_1^{})\varphi(\sigma_2^{})\varphi(\sigma_3^{})}{\varphi(\sigma_1')\varphi(\sigma_2')\varphi(\sigma_3')}}\\
[5mm]
\ds 
\times \EXP^{-\ii\pi(\sigma_1^{}\sigma_3^{}-\ii\eta(\sigma_1^{}+\sigma_3^{}-\sigma_2'))}
\int_{\mathbb{R}} du \EXP^{2\pi\ii u (\sigma_2'-\ii\eta)}
\frac{\varphi(u+\frac{\sigma_1'+\sigma_3'+\ii\eta}{2})\varphi(u+\frac{-\sigma_1^{}-\sigma_3^{}+\ii\eta}{2})}
{\varphi(u+\frac{\sigma_1^{}-\sigma_3^{}-\ii\eta}{2})\varphi(u+\frac{-\sigma_1^{}+\sigma_3^{}-\ii\eta}{2})},
\end{array}
\end{equation}
where 
$\delta_{\sigma,\sigma'}=\delta(\sigma-\sigma')$ and 
$\langle \sigma_1,\sigma_2,\sigma_3| =
\langle \sigma_1| \otimes \langle \sigma_2| \otimes \langle \sigma_3|$ etc.
The integral is convergent for $\sigma_i, \sigma'_i \in \R$ \cite{BMS}.
The function $\varphi$ is the quantum dilogarithm
\begin{equation*}
\varphi(z)=\exp \left(\frac{1}{4} \int_{\mathbb{R}+\ii 0}
\frac{\EXP^{-2\ii zw}}{\sinh(wb)\sinh(w/b)}\frac{dw}{w}\right),
\end{equation*}
which is manifestly symmetric under the exchange 
$b \leftrightarrow b^{-1}$. Its main difference property is
\begin{equation*}
\frac{\varphi(z-\ii b^{\pm 1}/2)}{\varphi(z+\ii b^{\pm 1}/2)}
= 1+\EXP^{2\pi z b^{\pm 1}}.
\end{equation*}
In fact this enables one to establish the formula (\ref{kernel}) 
by checking the defining relations (\ref{1})--(\ref{4}).

\begin{remark}\label{re:ask20}
The defining relations (\ref{1})--(\ref{4}) of $R$ are 
based on $\A_q$ but do not involve $q$ explicitly.
Moreover (\ref{kernel}) is symmetric under 
$b \leftrightarrow b^{-1}$.
It follows that the $R$ also satisfies (\ref{1})--(\ref{4})
with 
$\langle {\bf a}^\pm_i, {\bf k}^\pm_i\rangle$ replaced by
$\langle \tilde{\bf a}^\pm_i, \tilde{\bf k}^\pm_i\rangle$.
In this sense (\ref{kernel}) is the 3D $R$ for
the modular representation of $(\A_q, \A_{\tilde q})$.
This fact will further be utilized in Theorem \ref{th:main2}.
\end{remark}

\subsection{Boundary vectors}

Consider the boundary ket vectors $|\chi^{(s)}\rangle$ 
satisfying the left conditions in (\ref{chi1}) and (\ref{chi2}).
In the spirit of quantum mechanics
we denote its wave function $\langle\sigma|\chi^{(s)}\rangle$ by
$\chi^{(s)}(\sigma)$.
Then the conditions read
\begin{align}\label{x12}
\frac{\chi^{(1)}(\sigma -\ii \frac{b}{2})}{\chi^{(1)}(\sigma+\ii \frac{b}{2})} 
= \sqrt{\frac{1+\ii \EXP^{\pi b \sigma}}{1-\ii \EXP^{\pi b \sigma}}},
\qquad
\frac{\chi^{(2)}(\sigma-\ii b)}{\chi^{(2)}(\sigma+\ii b)} = 
\sqrt{\frac{1+\EXP^{2\pi b(\sigma + \ii \frac{b}{2})}}
{1+\EXP^{2\pi b(\sigma-\ii \frac{b}{2})}}}.
\end{align}
We may also set $\chi^{(s)}(\sigma) = \langle \chi^{(s)}|\sigma\rangle$ 
for the boundary bra vectors $\langle \chi^{(s)}|$ 
since $\langle \chi^{(s)}|\sigma\rangle$ 
obeys the same difference equations as (\ref{x12}). 

As we are concerned with the modular representation of $(\A_q, \A_{\tilde q})$,
it is natural to also consider the boundary vectors
$\langle \tilde{\chi}^{(s)}|$ and 
$|\tilde{\chi}^{(s)}\rangle$ obeying 
(\ref{chi1}) and (\ref{chi2}) with 
$\A_q=\langle {\bf a}^\pm, {\bf k}^\pm\rangle$ replaced by
$\A_{\tilde q}=\langle \tilde{\bf a}^\pm, \tilde{\bf k}^\pm\rangle$.
Their wave functions 
$\tilde{\chi}^{(s)}(\sigma) = \langle\sigma|\tilde{\chi}^{(s)}\rangle
= \langle \tilde{\chi}^{(s)}|\sigma \rangle$
are to satisfy (\ref{x12}) with $b$ replaced by $b^{-1}$:
\begin{align}\label{xt12}
\frac{\tilde{\chi}^{(1)}(\sigma -\ii \frac{b^{-1}}{2})}
{\tilde{\chi}^{(1)}(\sigma+\ii \frac{b^{-1}}{2})} 
= \sqrt{\frac{1+\ii \EXP^{\pi b^{-1} \sigma}}{1-\ii \EXP^{\pi b^{-1} \sigma}}},
\qquad
\frac{\tilde{\chi}^{(2)}(\sigma-\ii b^{-1})}
{\tilde{\chi}^{(2)}(\sigma+\ii b^{-1})} = 
\sqrt{\frac{1+\EXP^{2\pi b^{-1}(\sigma + \ii \frac{b^{-1}}{2})}}
{1+\EXP^{2\pi b^{-1}(\sigma-\ii \frac{b^{-1}}{2})}}}.
\end{align}
Introduce the function
\begin{equation}\label{chb}
\chi_b(\sigma) =  \exp\left(\frac{1}{8} \int_{\mathbb{R}+\ii 0}
\frac{\EXP^{-2\ii\sigma w}}{\sinh(wb)\cosh(w/b)} \frac{dw}{w}\right).
\end{equation}
It is analytic in the strip 
$-\eta < \mathrm{Im}(\sigma) < \eta$ 
but {\em not} symmetric under the exchange $b \leftrightarrow b^{-1}$.
See Appendix \ref{app:chi} for more properties.
Now we present our key observation.
\begin{proposition}\label{pr:key}
The following provides a solution to 
(\ref{x12}) and (\ref{xt12}):
\begin{equation*}
\chi^{(1)}(\sigma) = \tilde{\chi}^{(2)}(\sigma) = \chi_b(\sigma),\quad
\chi^{(2)}(\sigma) = \tilde{\chi}^{(1)}(\sigma) = \chi_{b^{-1}}(\sigma).
\end{equation*}
\end{proposition}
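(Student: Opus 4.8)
The plan is to verify the four difference equations (\ref{x12})--(\ref{xt12}) directly from the integral representation (\ref{chb}), after reducing them to two by the modular symmetry. Replacing $b\to b^{-1}$ in (\ref{chb}) turns $\sinh(wb)$ and $\cosh(w/b)$ into $\sinh(w/b)$ and $\cosh(wb)$, so that $\chi_{b^{-1}}(\sigma)=\chi_b(\sigma)|_{b\to b^{-1}}$; since the system (\ref{x12}) is likewise carried to (\ref{xt12}) under $b\to b^{-1}$, the two equations demanded of $\chi_{b^{-1}}$ (the $\chi^{(2)}$ equation of (\ref{x12}) and the $\tilde\chi^{(1)}$ equation of (\ref{xt12})) follow from the two demanded of $\chi_b$ (the $\chi^{(1)}$ equation of (\ref{x12}) and the $\tilde\chi^{(2)}$ equation of (\ref{xt12})) by this substitution. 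Hence it suffices to prove the latter pair for $\chi_b$.

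For a shift parameter $c$, the difference $\log\chi_b(\sigma-\ii c)-\log\chi_b(\sigma+\ii c)$ equals
\[
-\frac14\int_{\mathbb{R}+\ii0}\frac{\EXP^{-2\ii\sigma w}\sinh(2cw)}{w\,\sinh(wb)\cosh(w/b)}\,dw ,
\]
the factor $\sinh(2cw)$ arising from $\EXP^{-2cw}-\EXP^{2cw}$. For the $\chi^{(1)}$ equation I take $c=b/2$, whereupon the numerator $\sinh(bw)$ cancels $\sinh(wb)$ and leaves $-\tfrac14\int \EXP^{-2\ii\sigma w}/(w\cosh(w/b))\,dw$. For the $\tilde\chi^{(2)}$ equation I take $c=b^{-1}$ and use $\sinh(2w/b)=2\sinh(w/b)\cosh(w/b)$ to cancel $\cosh(w/b)$, leaving $-\tfrac12\int \EXP^{-2\ii\sigma w}\sinh(w/b)/(w\sinh(wb))\,dw$.

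Each surviving integral is then evaluated by closing the contour in the upper half plane for $\sigma$ in a strip where the arc contribution vanishes, and continuing analytically; the pole at $w=0$ lies below $\mathbb{R}+\ii0$ and is not enclosed. In the first case the enclosed poles are the zeros of $\cosh(w/b)$ at $w=\ii\pi b(n+\tfrac12)$, $n\ge0$, and summing the residues produces $\ii\sum_{n\ge0}\frac{(-1)^n}{2n+1}\EXP^{(2n+1)\pi b\sigma}=\ii\arctan(\EXP^{\pi b\sigma})=\tfrac12\log\frac{1+\ii\EXP^{\pi b\sigma}}{1-\ii\EXP^{\pi b\sigma}}$, which is the logarithm of the first right-hand side of (\ref{x12}). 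In the second case the enclosed poles are the zeros of $\sinh(wb)$ at $w=\ii\pi m/b$, $m\ge1$, and the residue sum is $\ii\sum_{m\ge1}\frac{(-1)^{m+1}}{m}\sin(\pi m/b^2)\,\EXP^{2\pi\sigma m/b}$; recognising $2\ii\sin(\pi m/b^2)=\tilde q^{\,m}-\tilde q^{-m}$ with $\tilde q=\EXP^{\ii\pi b^{-2}}$, this equals $\tfrac12\log\frac{1+\tilde q\,\EXP^{2\pi\sigma/b}}{1+\tilde q^{-1}\EXP^{2\pi\sigma/b}}$, the logarithm of the second right-hand side of (\ref{xt12}). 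Exponentiating and taking principal square roots yields the two identities, and the remaining two follow by $b\to b^{-1}$.

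The genuine content is analytic rather than algebraic: the cancellations are forced by the deliberately asymmetric denominator $\sinh(wb)\cosh(w/b)$ of (\ref{chb}), and it is exactly this non-symmetry under $b\leftrightarrow b^{-1}$ that allows a single function to satisfy a $\tfrac{b}{2}$-shift equation and a $b^{-1}$-shift equation simultaneously, which is the ``interchange of roles'' asserted by the proposition. The main obstacle is therefore the justification of the contour manipulations: restricting $\sigma$ to a suitable horizontal strip so that the semicircular arc vanishes and the residue series converges, and confirming that shifting $\sigma$ into the complex domain keeps (\ref{chb}) convergent. The identification of the two residue series with the $\arctan$ and $\log(1+\,\cdot\,)$ expansions is then routine. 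These computations and related properties of $\chi_b$ are collected in Appendix \ref{app:chi}.
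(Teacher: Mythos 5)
Your proposal is correct and is essentially the paper's own argument: the paper proves the proposition by taking the logarithm of the difference equations (\ref{x12}) and making a Fourier transformation under the same strip-analyticity assumption, which is precisely the inverse of your residue evaluation of the logarithm of (\ref{chb}). The only differences are presentational --- you verify the given kernel by contour integration rather than solving for it by Fourier inversion, and you spell out the $b\leftrightarrow b^{-1}$ reduction of the four equations to two, which the paper leaves implicit (``For example for $\chi^{(1)}(\sigma)$\,\ldots'').
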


The statement can be verified by a direct calculation. 
For example for $\chi^{(1)}(\sigma)$, 
taking the log of the difference equation (\ref{x12}), 
making Fourier transformation assuming analyticity in the strip 
$-\frac{1}{2}\mathrm{Re}(b) 
< \mathrm{Im}(\sigma) < \frac{1}{2}\mathrm{Re}(b)$ leads to the result.
Proposition \ref{pr:key} demonstrates a curious feature of the vectors 
$|\chi^{(1)}\rangle$ and $|\chi^{(2)}\rangle$. 
They are transformed to each other simply by the interchange 
$b \leftrightarrow b^{-1}$.

\subsection{Modular double for quantum groups}

The operator ${\bf h}$ in (\ref{ice}) and its modular 
dual $\tilde{\bf h}$ satisfying 
$[{\bf h},{\bf a}^\pm]=\pm {\bf a}^\pm$ and 
$[\tilde{\bf h},\tilde{\bf a}^\pm]=\pm \tilde{\bf a}^\pm$
can be taken as 
$-ib^{-1}\boldsymbol{\sigma}$ and $-ib\boldsymbol{\sigma}$.
See (\ref{heis}) and (\ref{iden}). 
Thus in (\ref{sdef}) we may choose 
$z^{\bf h}=\EXP^{2\pi \I \lambda \boldsymbol{\sigma}}$ so that 
$\langle \sigma | z^{{\bf h}}| \sigma'\rangle 
= \delta_{\sigma, \sigma'}\EXP^{2\pi \I \lambda \sigma}$
with $\lambda$ being the (additive) spectral parameter.
Let us denote the resulting object (\ref{sdef}) by $S^{s,t}(\lambda)$.
It is given by the integral kernel
\begin{equation}\label{Sker}
\langle \boldsymbol{\alpha},\boldsymbol{\beta}|
S^{s,t}(\lambda)|\boldsymbol{\alpha}',\boldsymbol{\beta}'\rangle 
= \int \prod_{k=0}^{n} d\sigma_k 
\chi^{(s)}(\sigma_0) \EXP^{2\pi\ii \lambda \sigma_0} 
\left(\prod_{k=1}^{n} \langle \alpha_k^{},\beta_k^{},\sigma_{k-1}^{}|
R|\alpha_k',\beta_k',\sigma_{k}^{}\rangle\right) \chi^{(t)}(\sigma_{n}),
\end{equation}
where $\langle \boldsymbol{\alpha},\boldsymbol{\beta}|
= \langle \boldsymbol{\alpha}| \otimes \langle \boldsymbol{\beta}|$
with $\langle \boldsymbol{\alpha}|
= \langle \alpha_1| \otimes \cdots \otimes \langle \alpha_n|,
\langle \boldsymbol{\beta}|
= \langle \beta_1| \otimes \cdots \otimes \langle \beta_n|\,
(\alpha_i, \beta_i \in \R)$ 
and similarly for 
$|\boldsymbol{\alpha}',\boldsymbol{\beta}'\rangle$\footnote{
The symbols $\boldsymbol{\alpha}$ etc. that appeared as labels of the 
spaces in (\ref{sdef}) are used here as variables. }.
The boundary wave function $\chi^{(s)}(\sigma)$ is the one 
specified in Proposition \ref{pr:key}.
Due to the conservation law [$\delta$ factors in (\ref{kernel})],
(\ref{Sker}) is proportional to 
$\prod_{k=1}^n\delta_{\alpha_k+\beta_k,\alpha'_k+\beta'_k}$
and the integrals over $\sigma_0, \ldots, \sigma_n \in \R$ 
actually reduce to a single one.

Introduce 
${\hat S}^{s,t}(\lambda) = (K_d\otimes 1) S^{s,t}(\lambda) (1\otimes K_d^{-1})$
generalizing (\ref{Shat}).
The operator $K_d$ simultaneously 
obeying (\ref{K}) and its modular counterpart, i.e., 
$K_d \tilde{\bf k}_j =\tilde{\bf k}_j K_d,\,
K_d \tilde{\bf a}^{\pm}_j = (\I \tilde{q}^{\frac{1}{2}})^{\pm 1}
\tilde{\bf a}^{\pm}_j  K_d\; (1\! \le\! j \! \le\! n)$,
is realized as
$\langle \boldsymbol{\alpha}|K_d|\boldsymbol{\alpha}'\rangle 
= \prod_{k=1}^n \delta_{\alpha_k^{},\alpha_k'} \EXP^{\pi\eta\alpha_k}$.

Let us describe the quantum group symmetry of the 
solution ${\hat S}^{s,t}(\lambda)$ of the Yang-Baxter equation.
We prepare the representation $\rho_\lambda$ of $U_q(\mathfrak{g}^{s,t})$ 
obtained by combining 
$\pi_z:  U_q(\mathfrak{g}^{s,t})\rightarrow \A_q^{\otimes n}$ 
in Proposition \ref{pr:pi} with $z=\EXP^{-2\pi b\lambda}$ and 
the modular representation of $(\A_q, \A_{\tilde q})$.
This choice of $z$ stems from 
$z^{\bf h}{\bf a}^\pm 
= \EXP^{2\pi \I \lambda \boldsymbol{\sigma}}{\bf a}^\pm
= \EXP^{\mp 2\pi b\lambda}{\bf a}^\pm z^{\bf h}$
and the remark at the end of (iii) in the proof of Theorem \ref{th:main1}.
Similarly let $\tilde{\rho}_\lambda$ be the representation of 
$U_{\tilde q}({}^L\mathfrak{g}^{s,t})$ consisting of  
$\pi_{\tilde z}:  U_{\tilde q}({}^L\mathfrak{g}^{s,t})
\rightarrow \A_{\tilde q}^{\otimes n}$ 
with ${\tilde z}=\EXP^{-2\pi b^{-1}\lambda}$
and the modular representation of $(\A_q, \A_{\tilde q})$.
Here ${}^L\mathfrak{g}^{s,t}=\mathfrak{g}^{3-s,3-t}$ is the
Langlands dual of $\mathfrak{g}^{s,t}$ as mentioned in Section \ref{ss:qaa}.

Given parameters $\mu,\nu$, let $\Delta(g)$ and $\Delta'(g)$ 
simply mean $(\rho_\mu \otimes \rho_\nu)(\Delta(g))$ and 
$(\rho_\mu \otimes \rho_\nu)(\Delta'(g))$ for $g \in U_q(\mathfrak{g}^{s,t})$.
Similarly let they mean 
$(\tilde{\rho}_\mu \otimes \tilde{\rho}_\nu)(\Delta(g))$ and 
$(\tilde{\rho}_\mu \otimes \tilde{\rho}_\nu)(\Delta'(g))$ 
for $g \in U_{\tilde q}({}^L\mathfrak{g}^{s,t})$.
The following theorem, which is the main result in this section, 
states that the symmetry of ${\hat S}^{s,t}(\lambda)$ implied by 
Theorem \ref{th:main1} is enhanced naturally to the modular double.
\begin{theorem}\label{th:main2}
For $\mu, \nu$ such that $\lambda= \mu-\nu$, the following 
commutativity is valid:
\begin{align*}
&\Delta'(g){\hat S}^{s,t}(\lambda) = 
{\hat S}^{s,t}(\lambda)\Delta(g)\qquad 
\forall g \in U_q(\mathfrak{g}^{s,t}),\\
&\Delta'(g){\hat S}^{s,t}(\lambda) = 
{\hat S}^{s,t}(\lambda)\Delta(g)\qquad 
\forall g \in U_{\tilde q}({}^L\mathfrak{g}^{s,t}).
\end{align*}
\end{theorem}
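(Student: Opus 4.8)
The plan is to derive both lines from Theorem~\ref{th:main1}, applied twice: once for $\A_q$ and once for its modular dual $\A_{\tilde q}$. The first line is essentially a specialization of Theorem~\ref{th:main1} to the modular representation, whereas the second line is the genuinely new statement and rests on the boundary-vector interchange of Proposition~\ref{pr:key} together with the $b\leftrightarrow b^{-1}$ symmetry recorded in Remark~\ref{re:ask20}.

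For the first line I would simply instantiate Theorem~\ref{th:main1} in the modular representation of $(\A_q,\A_{\tilde q})$. The only bookkeeping is the dictionary between the multiplicative and additive spectral parameters: with $\rho_\mu,\rho_\nu$ built from $\pi_z$ at $z=\EXP^{-2\pi b\mu}$ and $z=\EXP^{-2\pi b\nu}$, one has $x=\EXP^{-2\pi b\mu}$, $y=\EXP^{-2\pi b\nu}$, hence $z=x/y=\EXP^{-2\pi b\lambda}$ with $\lambda=\mu-\nu$, which is exactly the constraint $z=x/y$ of Theorem~\ref{th:main1}. Since ${\bf h}=-\ii b^{-1}\boldsymbol{\sigma}$ gives $z^{{\bf h}}=\EXP^{2\pi\ii\lambda\boldsymbol{\sigma}}$, the magnetic-field factor in (\ref{sdef}) is precisely the one used in (\ref{Sker}); the boundary vectors $\chi^{(s)},\chi^{(t)}$ of Proposition~\ref{pr:key} solve (\ref{chi1})--(\ref{chi2}) by construction, and $K_d$ obeys (\ref{K}). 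Thus every hypothesis of Theorem~\ref{th:main1} holds verbatim and the first commutativity follows.

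For the second line the idea is that the very same operator $\hat S^{s,t}(\lambda)$ is simultaneously a ``tilde'' transfer operator. By Remark~\ref{re:ask20} the kernel (\ref{kernel}) satisfies (\ref{1})--(\ref{4}) with $\langle{\bf a}^{\pm},{\bf k}^{\pm}\rangle$ replaced by $\langle\tilde{\bf a}^{\pm},\tilde{\bf k}^{\pm}\rangle$, so the entire derivation of Theorem~\ref{th:main1} reruns word for word with $\A_q\to\A_{\tilde q}$, $q\to\tilde q$ and $\pi_z\to\pi_{\tilde z}$, yielding $U_{\tilde q}(\mathfrak{g}^{s',t'})$-symmetry of the object $\hat{\tilde S}^{\,s',t'}(\lambda)$ obtained from $\tilde S^{s',t'}(\lambda)=\langle\tilde\chi^{(s')}|\tilde z^{\tilde{\bf h}_3}R\cdots R|\tilde\chi^{(t')}\rangle$ by the zig-zag transformation (\ref{Shat}) with the same $K_d$. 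It then remains to identify this tilde object with $\hat S^{s,t}(\lambda)$ for the right $(s',t')$. Proposition~\ref{pr:key} gives $\chi^{(s)}=\tilde\chi^{(3-s)}$ and $\chi^{(t)}=\tilde\chi^{(3-t)}$, so the boundary bra and ket of $\hat S^{s,t}(\lambda)$ are exactly the tilde boundary vectors of types $3-s$ and $3-t$. The magnetic field is also shared: with $\tilde z=\EXP^{-2\pi b^{-1}\lambda}$ and $\tilde{\bf h}=-\ii b\boldsymbol{\sigma}$ one gets $\tilde z^{\tilde{\bf h}}=\EXP^{2\pi\ii\lambda\boldsymbol{\sigma}}=z^{\bf h}$, while $K_d$ was defined to satisfy (\ref{K}) and its tilde analog simultaneously. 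Hence $\hat S^{s,t}(\lambda)=\hat{\tilde S}^{\,3-s,3-t}(\lambda)$ as operators, and the tilde form of Theorem~\ref{th:main1} with $(s',t')=(3-s,3-t)$ delivers commutativity under $U_{\tilde q}(\mathfrak{g}^{3-s,3-t})=U_{\tilde q}({}^L\mathfrak{g}^{s,t})$, which is the second line.

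The main obstacle --- and the only genuinely new input beyond Theorem~\ref{th:main1} --- is the operator identity $\hat S^{s,t}(\lambda)=\hat{\tilde S}^{\,3-s,3-t}(\lambda)$, whose three ingredients must all be checked: the $b\leftrightarrow b^{-1}$ symmetry of the $R$-relations (Remark~\ref{re:ask20}), the interchange $\chi^{(s)}=\tilde\chi^{(3-s)}$ of boundary data (Proposition~\ref{pr:key}), and the coincidence $z^{\bf h}=\tilde z^{\tilde{\bf h}}$ of the two magnetic-field operators in the modular representation. Once these are in place, matching $(3-s,3-t)$ to the Langlands dual ${}^L\mathfrak{g}^{s,t}=\mathfrak{g}^{3-s,3-t}$ is immediate. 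A secondary, purely technical point is to confirm that the formal manipulations in the proof of Theorem~\ref{th:main1} --- commuting ${\bf a}^{\pm}$ through $R$ and applying the boundary conditions --- remain valid at the level of the integral kernels (\ref{kernel}) and (\ref{Sker}) of the modular representation.
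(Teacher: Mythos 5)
Your proposal is correct and follows essentially the same route as the paper: the paper's own proof is exactly the two-step citation you reconstruct --- Theorem~\ref{th:main1} for the first line, and the first line combined with Remark~\ref{re:ask20} and Proposition~\ref{pr:key} for the second. Your write-up merely makes explicit the bookkeeping the paper leaves implicit, namely the spectral-parameter dictionary $z=\EXP^{-2\pi b\lambda}$, the coincidence $z^{{\bf h}}=\tilde{z}^{\tilde{{\bf h}}}=\EXP^{2\pi\ii\lambda\boldsymbol{\sigma}}$, and the index matching $\chi^{(s)}=\tilde{\chi}^{(3-s)}$ against ${}^L\mathfrak{g}^{s,t}=\mathfrak{g}^{3-s,3-t}$.
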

\begin{proof}
The first relation is due to Theorem \ref{th:main1}.
The second relation is due to the first relation, 
Remark \ref{re:ask20} and Proposition \ref{pr:key}.
\end{proof}

Let $U_{q,{\tilde q}}(\mathfrak{g}^{s,t})$
denote the {\em modular double} of quantum affine algebras 
in the sense of 
$U_{q,{\tilde q}}(\mathfrak{g}_{\R})$ 
\cite[eq. (1.8)]{Ip1} with $\mathfrak{g}_{\R}$ formally 
replaced by $\mathfrak{g}^{s,t}$\footnote{
$U_q(\mathfrak{g}^{s,t})$ and 
$U_{\tilde q}({}^L\mathfrak{g}^{s,t})$ 
actually commute only 
up to sign in general. 
See e.g., \cite[Prop. 9.1]{Ip1} 
for the positive principal series representations. 
The so-called transcendental relations therein are not considered here.}.
Set $\check{S}^{s,t}(\lambda) = P\hat{S}^{s,t}(\lambda)$, 
where $P$ is defined after (\ref{Del}).
By writing the coproduct action $\Delta(U_{q,{\tilde q}})$ just as $U_{q,{\tilde q}}$,
Theorem \ref{th:main2} may be rephrased symbolically as 
\begin{alignat*}{2}
&[\check{S}^{1,1}(\lambda), \,
U_{q,{\tilde q}}(D^{(2)}_{n+1})]=0,&\quad
&[\check{S}^{2,2}(\lambda), \,
U_{q,{\tilde q}}(C^{(1)}_{n})]=0,\\
&[\check{S}^{1,2}(\lambda), \,
U_{q,{\tilde q}}(A^{(2)}_{2n})]=0, & 
&[\check{S}^{2,1}(\lambda), \,
U_{q,{\tilde q}}(\tilde{A}^{(2)}_{2n})]=0.
\end{alignat*}

\section{Example: Fock representation}\label{sec:fock}
Here we explain that the specialization of 
the results in Proposition \ref{pr:pi} -- Theorem \ref{th:main1}
to the Fock representation of the $q$-oscillator algebra $\A_q$ 
reproduces the earlier result in \cite{KO4}.
We assume $q$ is generic.
By the Fock representation of $\A_q$ (\ref{qrel}) we mean  
the following on 
$F = \oplus_{m \ge 0}\C(q^{\frac{1}{2}})|m\rangle$:
\begin{align*}
{\bf a}^+|m\rangle = |m+1\rangle,\quad
{\bf a}^-|m\rangle = (1-q^{2m})|m-1\rangle,\quad
{\bf k}^{\pm 1}|m\rangle = q^{\pm(m+\frac{1}{2})}|m\rangle.
\end{align*}
Combining this with $\pi_z$ in Proposition \ref{pr:pi} 
yields an irreducible representation of $U_q(\mathfrak{g}^{s,t})$
on $F^{\otimes n}[z,z^{-1}]$ 
except $s=t=2$.
In the latter case, $F^{\otimes n}[z,z^{-1}]$ splits into two irreducible 
$U_q(C^{(1)}_n)$ modules.
They were obtained in \cite[Prop.~1-3]{KO4} without 
factoring through $\A_q^{\otimes n}$ via $\pi_z$, and called 
``$q$-oscillator representations".

The 3D $R$ associated with the Fock representation is given by 
\begin{align*}
&R(|i\rangle \otimes |j\rangle \otimes |k\rangle) = 
\sum_{a,b,c\ge 0} R^{a,b,c}_{i,j,k}
|a\rangle \otimes |b\rangle \otimes |c\rangle,\\
&R^{a,b,c}_{i,j,k} =\delta_{a+b, i+j}\delta_{b+c, j+k}
\sum_{\lambda+\mu=b}(-1)^\lambda
q^{i(c-j)+(k+1)\lambda+\mu(\mu-k)}
\frac{(q^2)_{c+\mu}}{(q^2)_c}
\binom{i}{\mu}_{\!\!q^2}
\binom{j}{\lambda}_{\!\!q^2},
\end{align*}
where 
$\binom{m}{k}_{\!\!q}= \frac{(q)_m}{(q)_k(q)_{m-k}},\,
(q)_m = (q; q)_m$ and 
$(z;q)_m = \prod_{k=1}^m(1-z q^{k-1})$.
The sum over $\lambda, \mu$ is taken under the conditions
$\lambda+\mu=b$, $0\le \mu\le i$ and $0\le \lambda \le j$.

This solution of the tetrahedron equation was obtained  
as the intertwiner of the quantum coordinate ring 
$A_q(sl_3)$ \cite{KV}\footnote{
The formula for it on p194 in \cite{KV} contains a misprint unfortunately.}.  
It was also found 
from a quantum geometry consideration in a different gauge including 
square roots \cite{BS, BMS}. 
They were shown to be the same object 
constituting a solution of the 3D reflection equation in \cite{KO1}.
The 3D $R$  can also be identified with the transition matrix
of the PBW bases of the nilpotent subalgebra of $U_q(sl_3)$ \cite{S}. 

Let $F^\ast = \oplus_{m\ge 0}
\C(q^{\frac{1}{2}})\langle m|$ be the right $\A_q$ module defined by 
\begin{align*}
\langle m | {\bf a}^+ = (1-q^{2m})\langle m-1|,\quad
\langle m | {\bf a}^-= \langle m+1|,\quad
\langle m | {\bf k}^{\pm1}=q^{\pm(m+\frac{1}{2})} \langle m |.
\end{align*}
The bilinear pairing of $F^*$ and $F$ is specified by
$\langle m | m' \rangle= (q^2)_m\delta_{m,m'}$.
The operator ${\bf h}$ argued around (\ref{ice}) can be defined by 
$\langle m | {\bf h} = m \langle m|$ and 
${\bf h}|m\rangle = m|m\rangle$.

The boundary vectors satisfying the postulates in Section \ref{sec:3DR} 
are given by  \cite{KS}
\begin{align*}
\langle \chi^{(s)}| = \sum_{m\ge 0}\frac{1}{(q^{s^2})_m}\langle sm|,\qquad
|\chi^{(s)}\rangle = \sum_{m\ge 0}\frac{1}{(q^{s^2})_m}|sm\rangle
\quad (s=1,2).
\end{align*}

In \cite{KO4},  Theorem \ref{th:main1} was proved for the present setting of 
the Fock representation.
It was also shown that 
$F^{\otimes n}[x,x^{-1}]\otimes F^{\otimes n}[y,y^{-1}]$ with generic $x,y$ is an 
irreducible $U_q(\mathfrak{g}^{s,t})$ module 
except $s=t=2$.
(In the latter case it splits into four irreducible $U_q(C^{(1)}_n)$ modules.) 
Thus the commutativity with $U_q$ 
provides a characterization of $S^{s,t}(z)$ up to normalization.

\appendix
\section{Properties of $\chi_b(\sigma)$}\label{app:chi}
Let us collect some properties of the function
$\chi_b(\sigma)$ in (\ref{chb}) which are derived by residue analyses.
In what follows, we will also use the Jacobi modular partner 
$\overline{q}=\EXP^{-\ii\pi b^{-2}}$ to $q$.
\begin{equation*}
\frac{\chi_b(\sigma)}{\chi_b(-\sigma)}
= \EXP^{-\frac{1}{2}\pi b^{-1}\sigma},
\quad
\chi_b(\sigma)\chi_{b^{-1}}(\sigma) 
= \frac{\varphi(\frac{\sigma+\ii\eta}{2})}{\varphi(\frac{\sigma-\ii\eta}{2})}.
\end{equation*}
Set
$w=\EXP^{2\pi b \sigma},\, \overline{w}=\EXP^{2\pi b^{-1}\sigma}$.
In the regime $|q|<1$ one has the infinite product representations
\begin{align*}
&\chi_b(\sigma) = \sqrt{\frac{(-\ii q^{1/2} w^{1/2};q)_\infty}{(\ii q^{1/2} w^{1/2};q)_\infty} \frac{(-\overline{q}^3\overline{w};\overline{q}^4)_\infty}{(-\overline{q}\,\overline{w};\overline{q}^4)_\infty}},\\
&\sqrt{\varphi(\sigma)}\chi_b(\sigma)
=\frac{(-\ii q^{1/2}w^{1/2};q)_\infty}
{(-\overline{q}\,\overline{w};\overline{q}^4)_\infty},
\quad
\frac{1}{\sqrt{\varphi(\sigma)}}\chi_b(\sigma) 
= \frac{(-\overline{q}^3\overline{w};\overline{q}^4)_\infty}{(\ii q^{1/2} w^{1/2};q)_\infty}.
\end{align*}
The following integral identities hold:
\begin{align*}
&\int_\R \chi_b(\sigma)^2 \EXP^{-2\pi\ii\sigma\lambda}d\sigma 
= \EXP^{-\frac{\pi}{2} b^{-1} (2\lambda - \ii \eta)} 
\frac{\chi_b(\ii\eta-2\lambda)^2}{\cosh \pi b \lambda},\\
&\int_\R \chi_b(\sigma)\chi_{b^{-1}}(\sigma)
\EXP^{-2\pi\ii\sigma\lambda}d\sigma 
= 2\EXP^{-\ii\pi\eta^2/2} 
\frac{\varphi(2\lambda)}{\varphi(2\lambda-\ii\eta)} \EXP^{2\pi\lambda\eta}.
\end{align*}
\begin{remark}
In \cite{KS} reduction of 
the 3D $L$ operators was studied by boundary vectors
in the Fock representation leading to the
quantum $R$ matrices for the spin representations of 
$U_q(D^{(2)}_{n+1}), U_q(B^{(1)}_n)$ and $U_q(D^{(1)}_n)$.
Formulas in this appendix enable one to calculate 
the reduction in the modular representation and give exactly the same 
$R$-matrices.
\end{remark}

\section*{Acknowledgments}
This work is supported by Australian Research Council and 
Grants-in-Aid for Scientific Research No.~23340007 and No.~24540203
from JSPS.

\end{document}